\numberwithin{equation}{section}
\newtheoremstyle{myprop}
  {6pt}
  {3pt}
  {\itshape}
  {}
  {\scshape}
  {.}
  {.5em}
  {}
\theoremstyle{myprop}
\newtheorem{proposition}{Proposition}[section]
\newtheorem{lemma}{Lemma}
\newtheorem{definition}[proposition]{Definition}
\theoremstyle{remark}
\newtheoremstyle{example}
  {6pt}
  {3pt}
  {\small\sffamily}
  {}
  {\small\sffamily\slshape}
  {.}
  {.5em}
  {}
\theoremstyle{example}
\renewcommand{\ge}{\geqslant}
\newcommand{\lca}{\operatorname{lca}}
\begin{document}
\title{Nonparametric estimation of the tree structure of a nested Archimedean copula}

\author[belgien]{Johan Segers\fnref{ARC}}
\ead{johan.segers@uclouvain.be}
\author[belgien]{Nathan Uyttendaele\fnref{ARC}\corref{cor1}}
\ead{nathan.uyttendaele@uclouvain.be}

\address[belgien]{Universit\'e catholique de Louvain, Institut de Statistique, Biostatistique et Sciences Actuarielles, Voie du Roman Pays 20, B-1348 Louvain-la-Neuve, Belgium}

\fntext[ARC]{Order of contributions not necessarily reflected by alphabetical order.}
\cortext[cor1]{Corresponding author}

\begin{abstract}
One of the features inherent in nested Archimedean copulas, also called hierarchical Archimedean copulas, is their rooted tree structure. A nonparametric, rank-based method to estimate this structure is presented. The idea is to represent the target structure as a set of trivariate structures, each of which can be estimated individually with ease. Indeed, for any three variables there are only four possible rooted tree structures and, based on a sample, a choice can be made by performing comparisons between the three bivariate margins of the empirical distribution of the three variables. The set of estimated trivariate structures can then be used to build an estimate of the target structure. The advantage of this estimation method is that it does not require any parametric assumptions concerning the generator functions at the nodes of the tree.
\end{abstract}

\begin{keyword}
Archimedean copula \sep 
dependence \sep
nested Archimedean copula \sep 
hierarchical Archimedean copula \sep 
rooted tree \sep
subtree \sep 
Kendall distribution \sep 
fan \sep
triple \sep
nonparametric inference
\end{keyword}
\maketitle
\section{Introduction}
Archimedean copulas have become a popular tool for modeling or simulating bivariate data. They are however not useful for every application, failing for instance to properly model in high dimensions if the data do not exhibit
symmetric dependencies. Nested Archimedean copulas (NACs), or hierarchical Archimedean copulas, are an interesting attempt to overcome this problem. They were first introduced by \citet[pp.~87--89]{tJOE97a} and then have been studied extensively, see for instance \cite*{doi:10.1080/00949650701255834}, \cite*{hofert2008sampling}, \cite*{hofert2010sampling} or \cite*{hofert2011efficiently} for sampling algorithms; \cite*{Hofert:Maechler:2010:JSSOBK:v39i09}, who released the first \textsf{R} package devoted to NACs; \cite*{hering2010constructing}, who investigated the construction of NACs with Lévy subordinators; \cite*{hofert2012densities}, who examined their densities; \cite*{OOW}, who were the first to investigate likelihood-based estimation; or \cite{okhrinOstap2013properties}, who studied tail dependence properties of NACs.

The hierarchy of variables in a nested Archimedean copula is described through a rooted tree. Most often, the tree is given from the context; see for instance \cite*{hofert2010sampling} or \cite*{puzanova2011hierarchical}. \cite*{OOW} were the first to address the issue of reconstructing the tree from a sample, offering a parametric answer in which each generator is assumed to be known up to some Euclidean parameter(s). In contrast, the method we propose is completely nonparametric since it does not require the user to make any assumption about the generators of the NAC from which the tree structure must be estimated, except for a rather straightforward identifiability condition introduced in Section \ref{identifiability_sec}. Although never formally mentioned, this identifiability condition is assumed throughout \cite*{OOW} as well.

Sections \ref{archi_sec} and \ref{nac_sec} of this paper review the basics of Archimedean copulas and nested Archimedean copulas. Section \ref{test} shows how the structure of three variables $(X_i,\allowbreak X_j,\allowbreak X_k)$ can be estimated nonparametrically. The idea is to estimate the Kendall distribution associated with each pair of variables within $(X_i,\allowbreak X_j,\allowbreak X_k)$; these estimates allow us then to decide if all pairs of variables have actually the same underlying bivariate distribution or not. If so, then the tree structure of $(X_i,\allowbreak X_j,\allowbreak X_k)$ is the trivial trivariate structure, that is, a structure with one internal vertex and three leaves, also called a 3-fan. If not, determining which pair has a different underlying bivariate distribution allows one to assign the correct tree structure to $(X_i,\allowbreak X_j,\allowbreak X_k)$.

Section \ref{sufficiency_triple} introduces a key point, namely that a given tree structure $\lambda$ can always be represented as a set of trivariate structures. That is, for a random vector of continuous random variables $(X_1, \ldots, X_d)$ with a nested Archimedean copula, it is possible to obtain the tree structure of this nested Archimedean copula provided the tree structure of the nested Archimedean copula associated with any three variables $(X_i, X_j, X_k)$ with distinct $i, j, k \in \{1, \ldots d \}$ is known. A very similar result was obtained by \cite*{ng1996}, who showed that a given structure $\lambda$ can always be represented as a set of \emph{triples} and \emph{fans}, triples and fans being formally defined in Section~\ref{sufficiency_triple}. Another interesting result is offered by \cite*{okhrinOstap2013properties} who showed that the structure can be retrieved from the bivariate margins of the nested Archimedean copula.

Our suggestion to estimate the structure of $(X_1, \ldots, X_d)$ is first to estimate the tree structure of the nested Archimedean copula associated with any three variables $(X_i, X_j, X_k)$ with distinct $i, j, k \in \{1, \ldots d \}$, and second to use this set of estimated trivariate structures to build an estimate of the structure of $(X_1, \ldots, X_d)$. This suggestion and one important related difficulty make up Section \ref{problem}.

The performance of our approach is then assessed by means of a simulation study involving target structures in several dimensions (Section \ref{sim_sec}). As part of this simulation study, the performance of the approach used by \cite*{OOW} is also investigated. 

Finally, Section \ref{app_sec} illustrates how our method could be used to highlight hierarchical interactions in the stock market. Some remaining challenges are outlined in Section \ref{dis_sec}.

\section{Archimedean copulas \label{archi_sec}}
Let $(X_1, \dots, X_d)$ be a vector of continuous random variables. The copula of this vector is defined as
\begin{equation} \nonumber
C(u_1, \dots, u_d)=P(U_1\leq u_1, \dots, U_d\leq u_d),
\end{equation}
where $(U_1, \dots, U_d)=(F_{X_1}(X_1), \dots, F_{X_d}(X_d))$, and where $F_{X_1}, \dots, F_{X_d}$ are the marginal cumulative distribution functions (CDFs) of $X_1, \dots, X_d$, respectively.

Archimedean copulas are a class of copulas that admit the representation
\begin{equation} \nonumber
C(u_1, \dots, u_d)=\psi(\psi^{-1}(u_1) +\dots+\psi^{-1}(u_d)),
\end{equation}
where $\psi$ is called the generator and $\psi^{-1}$ is its generalized inverse, with $\psi : [0, \infty) \rightarrow [0, 1]$, a convex, decreasing function such that $\psi(0)=1$ and $\psi(\infty)=0$. In order for $C$ to be a $d$-dimensional copula, the generator is required to be $d$-monotone on $[0, \infty)$, see \cite*{2009arXiv0908.3750M} for details.

The generators in Table \ref{archi_families} are among the most popular ones. All of them are completely monotone, that is, $d$-monotone for all integer $d \geq 2$. For the Frank family, $D_1(\theta)=\frac{1}{\theta}\int_0^\theta t/(\exp(t)-1)\,dt$.

\begin{table}[H]
  \centering
  \caption{Some popular generators of Archimedean copulas. \label{archi_families}}
  \scriptsize
    \begin{tabular}{cccc}
    \toprule
    name & generator $\psi(x)$   & $\theta $&$ \tau$\\
    \midrule
    
    AMH & $(1-\theta)/(e^x-\theta)$  & $\theta \in [0, 1)$ & $1-2\left(\theta+(1-\theta)^2 \log(1-\theta)\right)/(3\theta^2)$\\
       Clayton & $ (1+x)^{-1/\theta}$  & $\theta \in (0, \infty)$ &  $\theta/(\theta+2)$\\
       Frank & $ -\log(1-(1-e^{-\theta})e^{-x})/\theta$  & $\theta \in (0, \infty)$ & $1+4(D_1(\theta)-1)/\theta$\\
      Gumbel & $ \exp(-x^{1/\theta}) $  & $\theta \in [1, \infty)$ & $(\theta-1)/\theta$\\
          Joe & $ 1-(1-e^{-x})^{1/\theta}$ & $\theta \in [1, \infty)$ & $1-4\sum_{k=1}^\infty 1/(k(\theta k+2)(\theta(k-1)+2))$ \\

	\bottomrule
    \end{tabular}
\end{table}

The parameter $\theta$ in Table \ref{archi_families} allows one to control the strength of the dependence between any two variables of the related Archimedean copula. This is best understood by expressing Kendall's $\tau$ coefficient between any two variables of the related Archimedean copula in terms of $\theta$ \citep{Hofert:Maechler:2010:JSSOBK:v39i09}, as done in the last column of Table \ref{archi_families}.

All margins of the same dimension of an Archimedean copula are equal, that is, for all $m \in \{ 2, \ldots, d \}$ and for every subset $\{ i_1,
\ldots, i_m \}$ of $\{ 1, \ldots, d \}$ having $m$ elements, the two vectors
\begin{equation} \nonumber
(U_{i_1}, \ldots, U_{i_m}) \text{ and } (U_{1}, \ldots, U_{m}) \\
\end{equation}
have the same distribution. This result stems from the fact that for Archimedean copulas, $C(u_1, \dots, u_d)$ is a symmetric function in its arguments and this is why Archimedean copulas are sometimes also called \emph{exchangeable}. For modeling purposes, this exchangeability becomes
an increasingly strong assumption as the dimension grows.

\section{Nested Archimedean copulas \label{nac_sec}}

Asymmetries, allowing for more realistic dependencies, are obtained by plugging in Archimedean copulas into each other \citep[pp.~87--89]{tJOE97a}. For instance, in the two-dimensional Archimedean copula
\begin{equation} \nonumber
C_{D_0}(u_1, \bullet)=\psi_{D_0}(\psi_{D_0}^{-1}(u_1) +\psi_{D_0}^{-1}(\bullet)),
\end{equation}
the argument $\bullet$ can be replaced by another Archimedean copula, such as

\begin{equation} \nonumber
\boldsymbol{C_{D_{23}}(u_2, u_3)=\psi_{D_{23}}(\psi_{D_{23}}^{-1}(u_2) +\psi_{D_{23}}^{-1}(u_3))},
\end{equation}
in order to get a copula of the form
\begin{multline} \label{B_0}
C_{D_0}(u_1, \boldsymbol{C_{D_{23}}(u_2, u_3)})= \\ \psi_{D_0}\big(\psi_{D_0}^{-1}(u_1) +\psi_{D_0}^{-1}(\boldsymbol{\psi_{D_{23}}(\psi_{D_{23}}^{-1}(u_2) +\psi_{D_{23}}^{-1}(u_3))})\big).
\end{multline}

This last equation describes a copula where the bivariate marginal distribution of ($U_2$, $U_3$) is not the same as the bivariate marginal distribution of ($U_1$, $U_2$) or ($U_1$, $U_3$), provided the generators $\psi_{D_0}$ and $\psi_{D_{23}}$ are different. If the joint CDF of $(U_1, U_2, U_3)$ was a simple Archimedean copula, all the bivariate marginal distributions would have been identical. This allows one to appreciate how the symmetry inherent in Archimedean copulas can be broken, although some partial symmetry always remains, as the bivariate marginal distributions of $(U_1, U_2)$ and $(U_1, U_3)$ still coincide.

The way Archimedean copulas are nested corresponds to a rooted tree structure, which will be referred to as the \emph{NAC tree structure} or sometimes simply as the \emph{structure} later on. Nested Archimedean copulas, such as the one in (\ref{B_0}), are defined through that rooted tree structure and through a collection of generators, one for each node in the tree that is not a leaf. If the only nodes in the tree are the root and the leaves, then the copula is an Archimedean copula, that is, a nested Archimedean copula with trivial structure and only one generator.

\begin{definition}
\label{def:tree}
Let $D_0$ be a nonempty, finite set with $|D_0| = d$ elements. For concreteness, let $D_0 = \{U_1, \ldots, U_d\}$. Formally, a \emph{rooted tree structure} $\lambda$ on $D_0$ is a collection of nonempty subsets of $D_0$ such that
\begin{enumerate}[(i)]
  \setlength{\itemsep}{1pt}
  \setlength{\parskip}{0pt}
  \setlength{\parsep}{0pt}
\item $D_0 \in \lambda$;
\item $\{ U_j \} \in \lambda$ for every $U_j \in D_0$;
\item if $A, B \in \lambda$, then either $A \subset B$, $B \subset A$, or $A \cap B = \varnothing$.
\end{enumerate}
The elements of $\lambda$ are called the \emph{nodes} of the structure. The element $D_0$ of $\lambda$ is called the \emph{root node}, or \emph{root} in short; the singleton elements $\{ U_j \}$ of $\lambda$ are called the \emph{leaves}. The nodes of $\lambda$ that are not leaves are called the \emph{branching nodes}.
If $A, B \in \lambda$ are such that $A \subset B$, $A \ne B$, and there is no $C \in \lambda$ such that $A \subset C \subset B$ and $C \ne A$ and $C \ne B$, then $A$ is called a \emph{child} of $B$ and conversely $B$ is called the \emph{parent} of $A$. The set of children of $B$ in $\lambda$ is denoted by $\mathcal{C}(B, \lambda)$.
\end{definition}

For instance, the structure $\lambda$ implied by Equation (\ref{B_0}) is
\begin{equation} \nonumber
\bigl\{ \{ U_1, U_2, U_3 \}, \{ U_2, U_3 \}, \{ U_1 \}, \{ U_2 \}, \{ U_3 \} \bigr\},
\end{equation}
and it can be graphically represented as shown in the left-hand panel of Figure~\ref{picture_tree}, where $D_{123}$ is a convenient label for the subset $\{U_1, U_2, U_3 \}$ and $D_{23}$ for the subset $\{ U_2, U_3 \}$.

\begin{figure}[H]
\centering
\begin{tabular}{cc}

\includegraphics[width=0.225\textwidth]{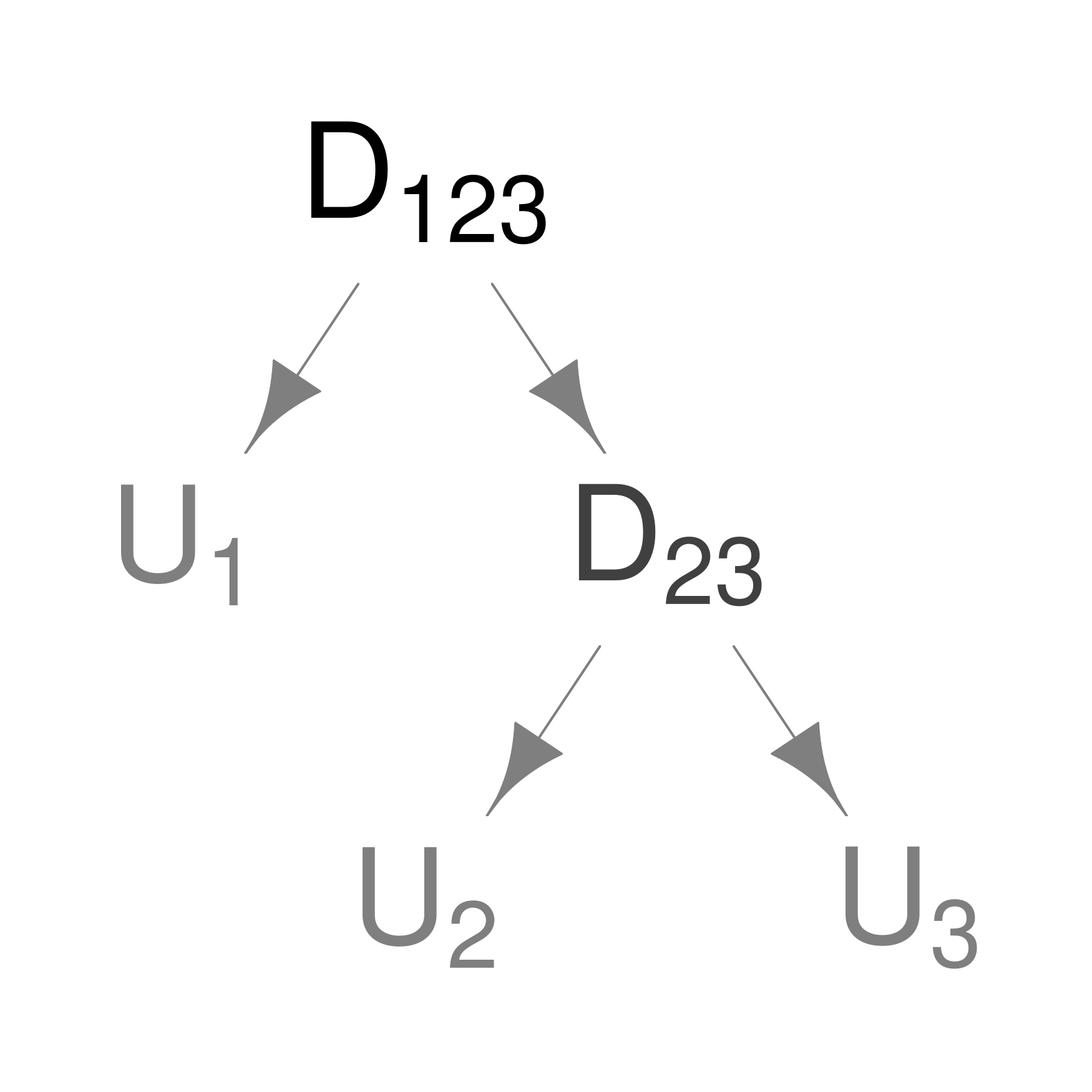}
&
\includegraphics[width=0.225\textwidth]{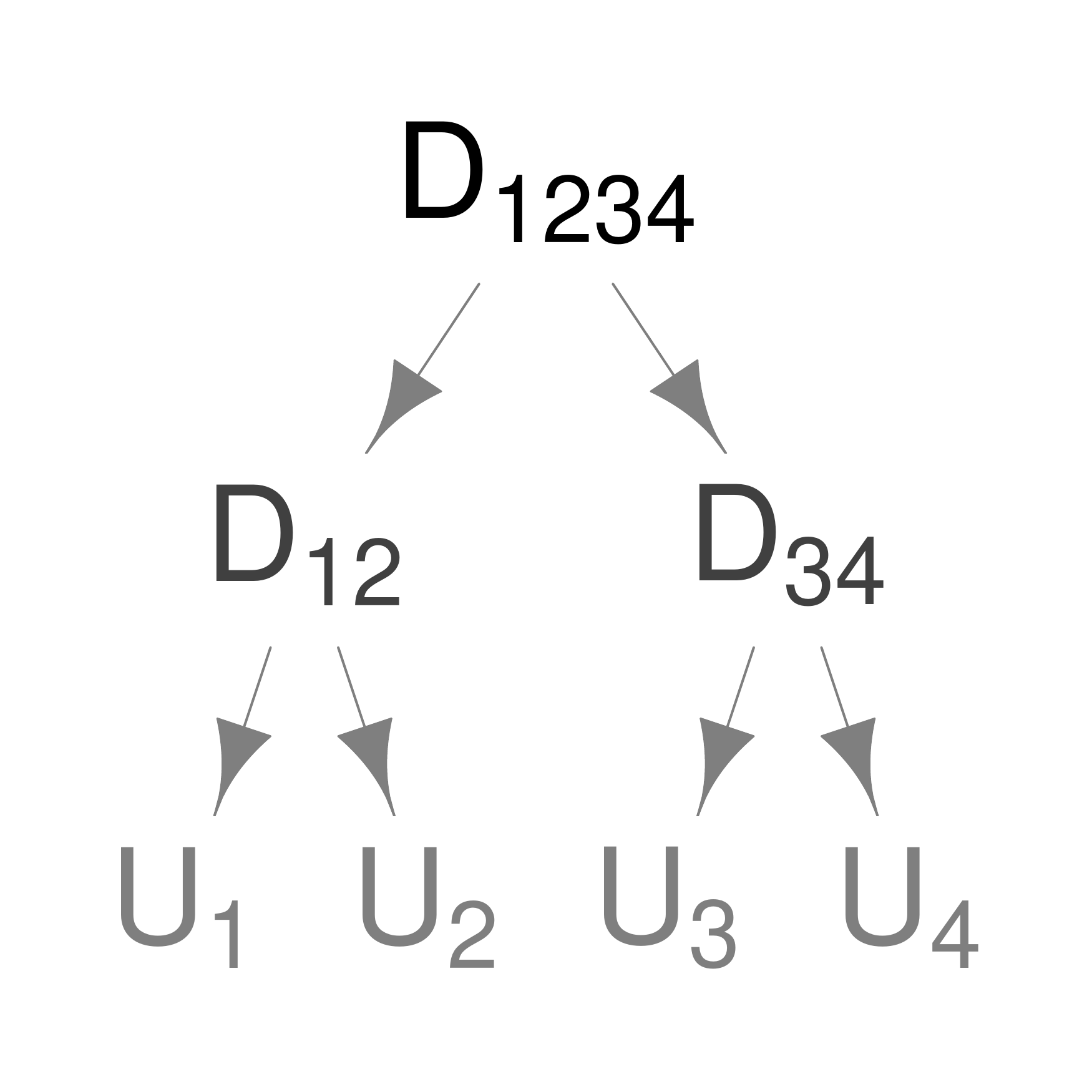}

\end{tabular}
\caption{On the left, the tree structure implied by Equation (\ref{B_0}). To ease the notation, the singletons $\{U_1\}$, $\{U_2\}$ and $\{U_3\}$ are denoted by $U_1, U_2$ and $U_3$. On the right, $D_{12}$ is a convenient label for $\{ U_1, U_2 \}$, as well as $D_{34}$ for the subset $\{ U_3, U_4\}$ and $D_{1234}$ for the set $\{U_1, U_2, U_3, U_4\}$. Again, we ease the notation by writing $U_1, \ldots, U_d$ instead of $\{U_1\}, \ldots, \{U_d\}$ for the singletons.\label{picture_tree}}
\end{figure}

\noindent In the structure on the left in Figure \ref{picture_tree}, $\{U_2\}$ and $\{U_3\}$ are the children of $D_{23}$ while $\{U_1\}$ and $D_{23}$ are the children of $D_{123}$, the root node.

Let $\lambda$ be a rooted tree on $D_0 = \{U_1, \ldots, U_d\}$. Define the related set of indices as $d_{D_0}=\{1, \ldots, d\}$. Suppose that for each $B \in \lambda$ with $|B| \ge 2$ we are given an Archimedean generator $\psi_B$, that is, we are given a generator for each branching node in the structure. Further let the set of indices related to $B$ be denoted as $d_B$.

Next, recursively define the functions $C_B : [0, 1]^{|B|} \to [0, 1]$, with $B \in \lambda$, $|B| \ge 1$, by
\begin{equation}
\label{eq:NAC}
  C_B(u_b : b \in d_B) = 
  \begin{cases}
  u_b 
    & \text{if $B = \{ U_b \}$} \\
  \psi_B 
  \Bigl( 
    \sum_{A \in \mathcal{C}(B, \lambda)} \psi_B^{-1} \bigl( C_A(u_a : a \in d_A) \bigr)
  \Bigr)
    & \text{if $|B| \ge 2$}
  \end{cases}.
\end{equation}

\begin{definition}
\label{def:NAC}
A $d$-variate copula $C_{D_0}$ is a \emph{nested Archimedean copula (NAC)} if it is of the form $C_B$ in~\eqref{eq:NAC}, with $B=D_0$.
\end{definition}

For any $A \subset D_0$ with $|A| \ge 2$, the copula $C_{A}$ on the variables $(u_a : a \in d_A)$ turns out to be a nested Archimedean copula, too. To describe its structure and its generators, we need a few more definitions.

Let $\lambda$ be a NAC structure on $D_0$ and let $A$ be a nonempty subset of $D_0$. The set $A$ need not be a node of $\lambda$. The NAC structure $\lambda$ induces a NAC structure on $A$ by the following operation:
\[
  \lambda \sqcap A= \{ A \cap B : B \in \lambda \} \setminus \{ \varnothing \}.
\]
That is, $\lambda \sqcap A$ is obtained by intersecting every node $B$ of $\lambda$ with $A$. Some of these intersections will be empty, and they are removed. Different nodes $B_1$ and $B_2$ of $\lambda$ may have identical intersections $B_1 \cap A$ and $B_2 \cap A$ with $A$; since $\lambda \sqcap A$ is the collection of all intersections, identical intersections are counted only once. 

It is easy to verify that this construction produces a tree structure on $A$: verification of (i), (ii), and (iii) in Definition~\ref{def:tree} is immediate.

Let $T$ be a subset of $D_0$ containing at least two elements, that is $|T|\geq2$. The set $T$ does not need to be a node of $\lambda$. The \emph{lowest common ancestor} (lca) in $\lambda$ of the elements of $T$ is given by the intersection of all the nodes $B$ in $\lambda$ that contain $T$, that is, 

\begin{equation}
\label{eq:sca_def}
  \lca(T, \lambda) = \bigcap_{ B \in \lambda : T \subseteq B} B,
\end{equation}
and it provides the lowest branching node (read: farthest from the root) through which the elements of $T$ are linked up in $\lambda$. For instance, looking back at Figure~\ref{picture_tree}, one can see that the lowest common ancestor between $U_2$ and $U_3$ is $D_{23}$, while lca($\{ U_1, U_2\}, \lambda)=D_0$ and lca($\{ U_1, U_3\}, \lambda)=D_0$.

Let $C_{D_0}$ be a $d$-variate nested Archimedean copula and let $A$ be a nonempty subset of $D_0$, not necessarily a node in the tree $\lambda$. The marginal copula $C_A$ on the variables in $A$ is a nested Archimedean copula, too. Its NAC structure is given by $\lambda \sqcap A$, and the generator function associated to a branching node $T$ in $\lambda \sqcap A$ is given by $\psi_{\lca(T, \lambda)}$.  

As appealing as it is, Definition \ref{def:NAC} is unfortunately not sufficient to guarantee that $C_{D_0}$ and its margins are copulas. A sufficient but not necessary condition was developed by \citet[pp.~87--89]{tJOE97a} and \cite*{doi:10.1080/00949650701255834}: the derivatives of $\psi_I^{-1} \circ \psi_J$ are required to be completely monotone for every pair of branching nodes $I$ and $J$ in the NAC structure such that $J$ is a child of $I$. As an example, a sufficient condition for $C_{D_0}$ in Equation (\ref{B_0}) to be a proper copula is that the derivatives of $\psi_{D_0}^{-1} \circ \psi_{D_{23}}$ are completely monotone. Although this sufficient nesting condition was originally formulated only in the context of fully nested Archimedean copula structures, that is, structures where each branching node has either two leaves as children, or one leave and another branching node, we assume this sufficient nesting condition to hold for any NAC structure. Also note this sufficient nesting condition can be weakened at least on the lowest nesting level of the structure, as briefly discussed in \cite*{hofert2012stochastic}.

The sufficient nesting condition is often easily verified if all generators appearing in the nested structure come from the same parametric family. For each family of Table \ref{archi_families}, two generators $\psi_I$ and $\psi_J$ of the same family with corresponding parameters $\theta_I$ and $\theta_J$ will fulfill the sufficient nesting condition if $\theta_I \leq \theta_J$, assuming $J$ is the child of $I$. Verifying the sufficient nesting condition if $\psi_I$ and $\psi_J$ do not belong to the same Archimedean family is usually harder, see for instance \cite*{hofert2010sampling}.

\section{Identifiability \label{identifiability_sec}}

Recall that a parameter $\theta$ (possibly infinite-dimensional) in a statistical model $(P_\theta : \theta \in \Theta)$, with $P_\theta$ a probability measure on a fixed space, is identifiable if $\theta_1 \ne \theta_2$ implies that $P_{\theta_1} \ne P_{\theta_2}$, that is, different parameters yield different distributions of the observable. For $d$-variate nested Archimedean copulas, the parameter $\theta$ consists of the pair 
\[ 
  \bigl( \lambda, \{\psi_B: B \in \lambda, |B| \ge 2\} \bigr).
\]

In this parametrization, the parameter $\theta$ is not identifiable, since replacing a generator function $\psi_B(x)$ by the function $\psi_B(ax)$, with $0 < a < \infty$, yields the same copula; that is, the generator functions are identifiable up to scaling only. This issue can be solved easily in different ways, for instance by requiring that $\psi_B(1) = 1/2$. This problem has however no impact on the structure $\lambda$ itself and is therefore of little interest for this paper.

A more fundamental identifiability issue arises if some generator functions are the same. Consider for instance the tree $\lambda$ implied by Equation (\ref{B_0}), shown on the left in Figure \ref{picture_tree}.
If the generators $\psi_{D_0}$ and $\psi_{D_{23}}$ are the same, say $\boldsymbol{\psi}$, then the nested Archimedean copula with parameter $(\lambda; \psi_{D_0}, \psi_{D_{23}}$) is

\begin{align*}
C_{D_0}(u_1, C_{D_{23}}(u_2, u_3))
  &=\psi_{D_0}\big(\psi_{D_0}^{-1}(u_1) +\psi_{D_0}^{-1}(\psi_{D_{23}}(\psi_{D_{23}}^{-1}(u_2) +\psi_{D_{23}}^{-1}(u_3)))\big) \\
  &= \boldsymbol{\psi} \bigl( \boldsymbol{\psi}^{-1}(u_1) + \boldsymbol{\psi}^{-1}(u_2) + \boldsymbol{\psi}^{-1}(u_3) \bigr),
\end{align*}
and actually describes an exchangeable Archimedean copula with generator $\boldsymbol{\psi}$, that is, a nested Archimedean copula with trivial tree structure and single generator $\boldsymbol{\psi}$. 

To ensure identifiability of the structure, we must require that for any two nodes $A$ and $B$ such that $A \subset B$ and $A \ne B$, meaning $A$ is a descendant of $B$, the bivariate Archimedean copulas generated by the generator functions $\psi_A$ and $\psi_B$ are different, prohibiting the tree structure to collapse at some level. If this condition holds, then the structure $\lambda$ can be identified. This weak restriction on the generators will be assumed to hold throughout this paper. 

Note that some generator functions can still be identical. Consider for instance the structure on the right in Figure \ref{picture_tree}. The generators associated to the nodes $D_{12}$ and $D_{34}$ can be identical, without simplification of the tree being possible.

Also note the implication of this identifiability condition on the sufficient nesting condition if all generators appearing in the nested structure come from the same parametric family. For each family of Table \ref{archi_families}, two generators $\psi_I$ and $\psi_J$ of the same family with corresponding parameters $\theta_I$ and $\theta_J$ will fulfill the sufficient nesting condition \emph{and} the identifiability condition if $\theta_I$ is \emph{strictly less} than $\theta_J$, assuming $J$ is a child of $I$.

\section{Nonparametric estimation of a trivariate NAC structure \label{test}}

Let $(X_1, X_2, X_3)$ be a vector of continuous random variables such that the joint distribution of $(U_1, U_2, U_3)=(F_{X_1}(X_1), F_{X_2}(X_2), F_{X_3}(X_3))$ is a nested Archimedean copula. We are interested in estimating the NAC structure based on $n$ observations $(x_{l1}, x_{l2}, x_{l3})$ from $(X_1, X_2, X_3)$, $l=1, \ldots, n$.

There are only four possible structures fulfilling Definition \ref{def:tree} for the trivariate case:
 {\small
 \[
   \begin{array}{rcl}
     \bigl\{ \{ U_1, U_2, U_3 \}, \{ U_1 \}, \{ U_2 \}, \{ U_3 \} \bigr\} &=& \Lambda_{123}; \\[1ex] 
     \bigl\{ \{ U_1, U_2, U_3 \}, \{ U_2, U_3 \}, \{ U_1 \}, \{ U_2 \}, \{ U_3 \} \bigr\} &=& \lambda_{23};\\[1ex]
     \bigl\{ \{ U_1, U_2, U_3 \}, \{ U_1, U_2 \}, \{ U_1 \}, \{ U_2 \}, \{ U_3 \} \bigr\}&=& \lambda_{12}; \\[1ex]
     \bigl\{ \{ U_1, U_2, U_3 \}, \{ U_1, U_3 \}, \{ U_1 \}, \{ U_2 \}, \{ U_3 \} \bigr\} &=& \lambda_{13}.
   \end{array}
 \]
 }In the trivial structure (tree $\Lambda_{123}$), all bivariate marginal distributions of the nested Archimedean copula are the same, while in structures $\lambda_{23}$, $\lambda_{12}$ and $\lambda_{13}$, two bivariate marginal distributions are the same and one is different. Moreover, if the bivariate marginal distributions are not all the same, being able to determine the one that is different from the two others is enough to select the proper nested Archimedean copula structure $\lambda_{23}$, $\lambda_{12}$ or $\lambda_{13}$.

It is known from \cite{genest1993} that the Kendall distribution of a pair of variables $(X_j, X_k)$ fully determines the copula of that pair provided the copula is Archimedean. Thus, rather than working directly with bivariate distributions, let us work with the related Kendall distributions which are univariate and therefore easier to handle. The Kendall distribution of the pair $(X_j$, $X_k)$ is defined as the distribution of the variable

\begin{equation}
W_{jk}=C_{jk}(U_j, U_k)=H_{jk}(X_j, X_k), \nonumber
\end{equation}

\noindent where $C_{jk}(u_j, u_k)=P(U_j \leq u_j, U_k \leq u_k)$ is the joint CDF of $(U_j, U_k)$, and where $H_{jk}(x_j, x_k)=P(X_j \leq x_j, X_k \leq x_k)$ is the joint CDF of $(X_j, X_k)$. The map defined for all $w \in [0,1]$ by

\begin{equation}
K_{jk}(w)=P(W_{jk} \leq w), \nonumber
\end{equation}

\noindent is the Kendall distribution function (\citeauthor{barbe1996kendall} \citeyear{barbe1996kendall}; \citeauthor{nelsen2003kendall} \citeyear{nelsen2003kendall}; \citeauthor{genest2001multivariate} \citeyear{genest2001multivariate}).

The Kendall distribution function of a pair of variables $(X_j, X_k)$ can be estimated \citep*{genest2011inference} by first computing the pseudo-observations $w_{1,jk}, \ldots, w_{n,jk}$ and then the empirical distribution function of these pseudo-observations:

\begin{equation*}
w_{m,jk}=\frac{1}{n+1}\sum_{l=1}^n 1(x_{lj}<x_{mj}, x_{lk}<x_{mk});
\label{pseudo_obs_kendall}
\end{equation*}



\[
K_{n,jk}(w)=\frac{1}{n} \sum_{m=1}^n 1(w_{m,jk} \leq w), \text{ with } 0<w<1.
\]



Since there are three possible pairs in our case, namely $(X_1, X_2), (X_1, X_3)$ and $(X_2, X_3)$, three empirical Kendall distribution functions need to be estimated. The distance between the empirical Kendall distribution functions of $(X_i, X_j)$ and $(X_i, X_k)$ is defined as

\begin{equation}
\int_0^1 |K_{n,ij}(x)-K_{n,ik}(x)|\,\,dx= \frac{1}{n}\sum_{m=1}^n |w_{(m), ij}-w_{(m), ik}|=\delta_{ij,ik}, \nonumber
\end{equation}
where $w_{(1), ij}, \ldots, w_{(n), ij}$ are the ordered pseudo-observations related to the variables $(X_i, X_j)$ and $w_{(1), ik}, \ldots, w_{(n),ik}$ are those related to the variables $(X_i, X_k)$.

Typically, a trivial structure will result in three distances that are all about the same, while trees such as $\lambda_{12}$, $\lambda_{13}$ or $\lambda_{23}$ will result in one small distance relative to two other distances that are bigger and about the same. Thus for any three variables $(X_i, X_j, X_k)$, if, for instance, $\delta_{ij,ik}$ is the minimum among the three distances, it seems reasonable to assume that the tree spanned on $(X_i, X_j, X_k)$ is either the trivial structure or the structure $\lambda_{jk}$ where $(X_i, X_j)$ and $(X_i, X_k)$ have the same Kendall distribution.

The problem of determining the structure of $(X_1, X_2, X_3)$ can thus be rewritten as an hypothesis test:

\begin{table}[h]
\centering
\small
\begin{tabular}{ll}
$H_0:$& the true structure is the trivial structure. \\
$H_1:$& the true structure is structure $\lambda_{12}$ or $\lambda_{13}$ or $\lambda_{23}$, depending on which\\ 
 &pair of Kendall functions were the closest.
\end{tabular}
\end{table}
As a test statistic, the absolute difference between the minimum distance and the average of the two remaining distances is used. The null hypothesis is rejected when the test statistic is observed in the upper tail of its $H_0$ distribution.

As the $H_0$ distribution of the test statistic is unknown, we rely on the bootstrap to calculate p-values. Under $H_0$ the original sample is assumed to come from an unknown trivariate Archimedean copula. Using the work of \cite{genest2011inference}, it is possible to estimate that Archimedean copula nonparametrically and to resample from that estimated Archimedean copula. For each new sample one obtains the three empirical Kendall distributions, the three distances, and the related test statistic. The p-value of the observed test statistic from the original sample is then estimated by the proportion of test statistics obtained from the new samples that are greater than or equal to the value of the observed test statistic from the original sample. Should this estimated p-value be lower than a significance level $\alpha$, for instance 10\%, the null hypothesis is rejected.

Since the estimator for the Kendall distribution depends on the data only through the ranks and since our test statistic only depends on this estimator, our NAC structure estimator is rank-based, too.

There are two key points in the test presented above:

\begin{itemize}
  \setlength{\itemsep}{1pt}
  \setlength{\parskip}{0pt}
  \setlength{\parsep}{0pt}
\item First, determine which should be the alternative hypothesis. Should it be structure $\lambda_{12}$, $\lambda_{13}$ or $\lambda_{23}$?
\item Second, choose between a trivial structure (= $H_0$) and $H_1$.

\end{itemize}

\noindent Possible errors are:

\begin{itemize}
  \setlength{\itemsep}{1pt}
  \setlength{\parskip}{0pt}
  \setlength{\parsep}{0pt}
\item If the true structure is the trivial structure, rejecting it and therefore committing a type I error;
\item If the true structure is structure $\lambda_{12}$, $\lambda_{13}$ or $\lambda_{23}$, failing to reject $H_0$ (type II error);
\item If the true structure is for instance structure $\lambda_{12}$, getting a wrong $H_1$ and then picking $H_1$ (we will call this a type III error).

\end{itemize}


The main difficulty with the test developed in this section is encountered when the true structure is the trivial trivariate structure, that is, the structure one gets when the nested Archimedean copula is actually an exchangeable Archimedean copula. Indeed if the probability of committing a type I error is fixed to $\alpha=0.10$, the trivial structure will be rejected 10\% of the time regardless of the input sample size $n$. Our estimator is therefore not a consistent estimator for the trivial trivariate structure, unless we let $\alpha$ tend to 0 as $n$ increases, so that type I errors are asymptotically impossible. Practically speaking however, this rule has little significance as it does not help to select $\alpha$ given a value of $n$. In the simulation section of this paper, $\alpha$ will be fixed to $10\%$ for all $n$, yielding satisfactory performance.

\section{Recovering a target structure from trivariate structures \label{sufficiency_triple}}
In Section \ref{test}, we showed how to infer the tree structure for three variables at a time. Next, we need to assemble these trivariate structures into a single $d$-variate structure. For this to be possible, we need to ensure that the full tree is indeed determined by the tree structures it induces on the collection of subsets of three variables.

Let $\lambda$ be a NAC structure on $D_0=\{ U_1, \ldots , U_d \}$, $d\geq 3$. Let $B$ be a branching node of $\lambda$. The set of all children of $B$ forms a partition of $B$, that is, taking the union of all children of a branching node $B$ allows to reconstruct that branching node. As a consequence, every branching node has at least two children.

Since the children of a branching node $B$ form a partition of $B$ and since each branching node has at least two children, it follows that each branching node can be reconstructed from the pairs of which it is the lowest common ancestor, that is, for every branching node $B$, we have
\begin{align}
  B = \bigcup \bigl\{ \{ U_i, U_j \} \subset D_0 : U_i \ne U_j, \, \lca(\{U_i , U_j \}, \lambda) = B \bigl\}. 
  \label{reconstruct}
\end{align}

The relation ``\ldots has the same lowest common ancestor as \ldots'' is an equivalence relation on the set of pairs $\{U_i, U_j \}$ of $D_0$. This relation induces a partition of the set of pairs into equivalence classes: two pairs $\{U_i, U_j \}$ and $\{ U_k, U_l \}$ belong to the same equivalence class if and only if they have the same lowest common ancestor in $\lambda$.

By Equation (\ref{reconstruct}), the nested Archimedean copula structure $\lambda$ can be reconstructed from the equivalence relation it induces on the set of pairs: every equivalence class of pairs corresponds to a branching node, the branching node being given by the union of the pairs in that equivalence class. Put differently, the union of all pairs within an equivalence class yields the branching node that is the lowest common ancestor for each pair in that equivalence class. Hence, every NAC structure $\lambda$ on $D_0$ can be represented as a partition on the set of pairs of $D_0$.

Let $d \geq 4$. Suppose that for any set $K_{ijk}=\{ U_i, U_j, U_k \}$ with distinct $i, j, k \in \{1, \ldots, d\}$, the tree spanned on $\{ U_i, U_j, U_k \}$, $\lambda \sqcap K_{ijk}$, is known. Define $^{3}(\lambda)$ as the set of these $\bigl(\begin{smallmatrix}
d \\ 3
\end{smallmatrix} \bigr)$ trees.

In Proposition \ref{main:prop}, it is shown that the nested Archimedean copula structure $\lambda$ can be recovered from $^{3}(\lambda)$. Lemmas \ref{lem:scaspaneq} and \ref{lem:sca} contain some auxiliary results, with proofs in the Appendix.








\begin{lemma}
\label{lem:scaspaneq}
Let $\lambda$ be a tree on $D_0$. For any nonempty subsets $T_1, T_2, C$ of $D_0$ such that $T_1 \cup T_2 \subset C$, we have
\begin{align*}
  &\lca( T_1, \lambda ) = \lca( T_2, \lambda ) \iff \lca( T_1, \lambda \sqcap C ) = \lca( T_2, \lambda \sqcap C ).
\end{align*}
\end{lemma}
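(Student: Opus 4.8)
The plan is to reduce both sides of the equivalence to a statement about a single object, the node $\lca(T_i,\lambda)$, by first proving a transfer identity for the lca under restriction to $C$. Concretely, I would show that for every nonempty $T \subseteq C$,
\[
  \lca(T, \lambda \sqcap C) = \lca(T, \lambda) \cap C .
\]
To see this, recall that the nodes of $\lambda \sqcap C$ are exactly the nonempty sets $B \cap C$ with $B \in \lambda$. Because $T \subseteq C$, a node $B \cap C$ contains $T$ if and only if $B$ itself contains $T$; thus the family of nodes of $\lambda \sqcap C$ containing $T$ is precisely $\{\, B \cap C : B \in \lambda,\ T \subseteq B \,\}$. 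Intersecting this family and pulling the common factor $C$ out of the intersection yields $\bigl(\bigcap_{B \in \lambda,\, T \subseteq B} B\bigr) \cap C = \lca(T, \lambda) \cap C$, as claimed.

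Before applying this, I would record the elementary but essential fact that $\lca(T,\lambda)$ is the \emph{smallest} node of $\lambda$ containing $T$. This holds because any two nodes containing the nonempty set $T$ intersect (in $T$), so by part (iii) of Definition~\ref{def:tree} the nodes containing $T$ form a chain; being a finite nonempty chain (it contains $D_0$), it has a least element, which equals the intersection in \eqref{eq:sca_def}. This minimality is what powers the nontrivial direction.

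Writing $N_i = \lca(T_i,\lambda)$, the transfer identity rewrites the assertion as $N_1 = N_2 \iff N_1 \cap C = N_2 \cap C$. The implication ``$\Rightarrow$'' is immediate. For ``$\Leftarrow$'', suppose $N_1 \cap C = N_2 \cap C =: M$. Since $T_1 \subseteq N_1$ and $T_1 \subseteq C$, we have $\varnothing \ne T_1 \subseteq M \subseteq N_1 \cap N_2$, so the two nodes $N_1, N_2$ intersect and hence are nested by (iii); say $N_1 \subseteq N_2$. Then $T_2 \subseteq N_2 \cap C = M \subseteq N_1$, so $N_1$ is a node containing $T_2$, and minimality of $N_2 = \lca(T_2,\lambda)$ forces $N_2 \subseteq N_1$. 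Thus $N_1 = N_2$, completing the argument.

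The argument is short, and the only real obstacle is the transfer identity: one must check carefully that restricting each node to $C$ neither creates nor destroys the relation ``contains $T$'' (this is exactly where the hypothesis $T \subseteq C$ is used) and that the phenomenon of several nodes $B$ collapsing to the same intersection $B \cap C$ in $\lambda \sqcap C$ is harmless, since we only ever take the intersection of the qualifying nodes. Once the identity is established, the equivalence follows in a couple of lines from the chain/minimality property of the lca.
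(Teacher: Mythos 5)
Your proposal is correct and takes essentially the same route as the paper: both proofs first establish the transfer identity $\lca(T, \lambda \sqcap C) = \lca(T, \lambda) \cap C$ for nonempty $T \subseteq C$, and then deduce the equivalence from it. The only cosmetic difference lies in the final step, where you invoke the chain/minimality property of the lca together with part (iii) of Definition~\ref{def:tree}, whereas the paper chases inclusions directly (namely $\lca(T_1,\lambda) \supseteq \lca(T_1,\lambda) \cap C = \lca(T_2,\lambda) \cap C \supseteq T_2$, hence $\lca(T_1,\lambda) \supseteq \lca(T_2,\lambda)$, plus symmetry); these amount to the same argument.
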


Essentially this lemma states that if two subsets of $D_0$ have the same lowest common ancestor in $\lambda$, then they also have the same lowest common ancestor in any subtree of $\lambda$, provided the two subsets are included in the set of leaves of that subtree. 
For instance, consider the structure in the left-hand panel of Figure \ref{sevenvariate_sim}, where $D_0=\{ U_1, \ldots , U_7 \}$. The set $\{ U_5, U_6 , U_7 \}$ has the same lowest common ancestor as the set  $\{ U_5, U_7 \}$ in $\lambda$, this lowest common ancestor being the node $D_{567}$. This holds true even if you consider only the subtree spanned on $\{U_4, U_5, U_6 , U_7 \}$, that is, even if you only consider the right branch of the structure in Figure \ref{sevenvariate_sim}.



\begin{lemma}
\label{lem:sca}
Let $\lambda$ be a tree on $D_0$ and let $A \in \lambda$. Let $B$ be a nonempty subset of $D_0$ with a least two elements. The lowest common ancestor of $B$ is equal to $A$ if and only if $B \subset A$ and there exist distinct children $B_1$ and $B_2$ of $A$ such that $B \cap B_1 \ne \varnothing$ and $B \cap B_2 \ne \varnothing$.
\end{lemma}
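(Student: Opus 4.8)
The plan is to use two structural facts established above: that $\lambda$ is a laminar family (property (iii) of Definition~\ref{def:tree}), and that the children of any branching node partition it. The first fact has the important consequence that $\lca(B,\lambda)$ is not merely an intersection but is itself the \emph{smallest node of $\lambda$ containing $B$}. Indeed, the collection of nodes $C \in \lambda$ with $B \subseteq C$ is nonempty (it contains $D_0$) and totally ordered by inclusion, since any two nodes containing $B$ have nonempty intersection and hence, by laminarity, one must contain the other. A finite chain has a least element, which equals the intersection in~\eqref{eq:sca_def}; thus $\lca(B,\lambda) \in \lambda$, $B \subseteq \lca(B,\lambda)$, and no proper subnode of $\lca(B,\lambda)$ contains $B$. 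I would record this observation first, as both directions rely on it.

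For the forward direction, assume $\lca(B,\lambda) = A$. Then $B \subseteq A$ is immediate, and since $|B| \ge 2$ the node $A$ cannot be a leaf, so its children are well defined and partition $A$. Suppose, for contradiction, that $B$ meets at most one child of $A$. Since $B \subseteq A$ and the children partition $A$, every element of $B$ lies in some child; if only one child $B_1$ is met, then $B \subseteq B_1$. But $B_1$ is a node properly contained in $A$ that contains $B$, contradicting the minimality of $A = \lca(B,\lambda)$. Hence $B$ meets at least two distinct children $B_1, B_2$, as claimed.

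For the converse, assume $B \subseteq A$ and that $B$ meets two distinct children $B_1, B_2$ of $A$. Since $A \in \lambda$ contains $B$, minimality gives $\lca(B,\lambda) \subseteq A$; write $A' := \lca(B,\lambda)$ and suppose for contradiction that $A' \subsetneq A$. The crux is to show that $A'$ lies inside a single child of $A$. By laminarity, for each child $B_i$ of $A$ one of $A' \subseteq B_i$, $B_i \subseteq A'$, or $A' \cap B_i = \varnothing$ holds. If $B_i \subseteq A'$ for some child, then in fact $B_i = A'$, since $B_i \subsetneq A' \subsetneq A$ would contradict that $B_i$ is a maximal proper subnode of $A$; but $A' = B_i$ yields $B \subseteq A' = B_i$, so $B$ meets only the child $B_i$, contradicting the hypothesis. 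Hence for every child either $A' \subseteq B_i$ or $A' \cap B_i = \varnothing$; as $A'$ is nonempty and $A' \subseteq A = \bigcup_i B_i$, it is contained in exactly one child $B_i$. But then $B \subseteq A' \subseteq B_i$ once more meets only one child, the final contradiction. Therefore $A' = A$, that is, $\lca(B,\lambda) = A$.

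I expect the main obstacle to be the laminar case analysis in the converse: one must rule out the possibility that the candidate ancestor $A'$ straddles several children or properly contains one of them, and the argument leans precisely on the definition of a child as a \emph{maximal} proper subnode together with the disjointness of distinct children. Everything else reduces to the bookkeeping that $\lca$ is the minimal containing node and that children partition their parent.
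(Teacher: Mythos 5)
Your proof is correct and follows essentially the same route as the paper's: the forward direction uses the fact that the children of $A$ partition $A$ (so that meeting only one child would produce a smaller common ancestor), and the converse uses the laminarity condition (iii) of Definition~\ref{def:tree} together with the maximality of children. The only organizational difference is that you first establish that $\lca(B,\lambda)$ is itself the smallest node of $\lambda$ containing $B$ and then argue by contradiction on that minimal node, whereas the paper shows directly that $A$ is contained in every node $A'$ containing $B$; the underlying mechanism is identical.
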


The meaning of this lemma is less straightforward. It states that if $B$ is a subset of $A$, $A$ being a node of $\lambda$, the only way $A$ is going to be the lowest common ancestor of $B$ is if $B$ has a nonempty intersection with two distinct children of $A$. 
Consider Figure \ref{sevenvariate_sim} again. If $A=\{U_4, U_5, U_6 , U_7 \}$, then the lca of $B=\{U_4, U_5\}$ is $A$ because in that case, $B$ has a nonempty intersection with the only two children of $A$, these two children being $\{U_4\}$ and $D_{567}=\{U_5, U_6 , U_7 \}$.



\begin{proposition}
\label{main:prop}
The NAC structure $\lambda$ can be recovered from the set $^{3}(\lambda)$, that is, it is possible to retrieve the partition of the set of pairs $\{U_i, U_j\}$ of $D_0$ into equivalence classes from the set $^{3}(\lambda)$.
\end{proposition}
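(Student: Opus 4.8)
The plan is to reduce everything to recovering a single combinatorial object: the equivalence relation on the pairs $\{U_i, U_j\} \subset D_0$ given by ``having the same lowest common ancestor in $\lambda$''. As already observed through Equation~(\ref{reconstruct}) and the ensuing discussion, the branching nodes of $\lambda$ are exactly the unions of the equivalence classes of this relation, so $\lambda$ is determined by it. Hence it suffices to show that, for any two pairs of $D_0$, one can decide from $^{3}(\lambda)$ alone whether they have the same lowest common ancestor. Since any two pairs together involve at most four leaves, I would split the argument according to whether the two pairs share a leaf or are disjoint.

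For two pairs sharing a leaf, say $\{U_i, U_j\}$ and $\{U_i, U_k\}$, their union is the triple $K_{ijk}$, and the tree $\lambda \sqcap K_{ijk}$ is an element of $^{3}(\lambda)$. Applying Lemma~\ref{lem:scaspaneq} with $T_1 = \{U_i, U_j\}$, $T_2 = \{U_i, U_k\}$ and $C = K_{ijk} = T_1 \cup T_2$ shows that $\lca(T_1, \lambda) = \lca(T_2, \lambda)$ if and only if the same equality holds in $\lambda \sqcap K_{ijk}$. The latter can be read off the known trivariate structure directly: in the $3$-fan $\Lambda_{ijk}$ all three pairs share their lowest common ancestor, whereas in each of $\lambda_{jk}$, $\lambda_{ik}$, $\lambda_{ij}$ exactly the two pairs different from the grouped pair do. Thus the relation is decidable from $^{3}(\lambda)$ for every pair of pairs meeting in a common leaf.

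For two disjoint pairs $\{U_i, U_j\}$ and $\{U_k, U_l\}$ (four distinct leaves), I would prove the equivalence that they share a lowest common ancestor if and only if there is a \emph{cross pair} $P \in \{\{U_i,U_k\}, \{U_i,U_l\}, \{U_j,U_k\}, \{U_j,U_l\}\}$ that shares a lowest common ancestor with both $\{U_i,U_j\}$ and $\{U_k,U_l\}$. The direction ``$\Leftarrow$'' is merely transitivity of the relation. For ``$\Rightarrow$'', suppose the common lowest common ancestor is the node $B$. By Lemma~\ref{lem:sca} applied to a two-element set, a pair has lowest common ancestor $B$ precisely when its two elements lie in distinct children of $B$; so $U_i, U_j$ lie in distinct children and $U_k, U_l$ lie in distinct children of $B$. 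Writing $U_i \in B_a$, $U_k \in B_c$, $U_l \in B_d$ for children of $B$ with $c \ne d$, one cannot have both $a = c$ and $a = d$, so at least one of $\{U_i, U_k\}$, $\{U_i, U_l\}$ has its two elements in distinct children of $B$ and hence, again by Lemma~\ref{lem:sca}, has lowest common ancestor $B$; this is the required cross pair. Crucially, each comparison ``$P$ shares a lowest common ancestor with an original pair'' involves two pairs meeting in a common leaf, and is therefore decidable from a single triple of $^{3}(\lambda)$ by the previous paragraph. Combining the two cases recovers the full equivalence relation, and hence $\lambda$, from $^{3}(\lambda)$.

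The main obstacle is the disjoint case, where the two pairs span four leaves while only trivariate information is available. The ``$\Rightarrow$'' step is where care is needed: it is \emph{not} true that an arbitrary intermediate pair (e.g.\ $\{U_i, U_k\}$) must share the lowest common ancestor $B$, so the argument must guarantee the \emph{existence} of a suitable cross pair, which is exactly the counting observation that $c \ne d$ forbids $a = c = d$. Once this is in place, the remainder is a direct application of Lemmas~\ref{lem:scaspaneq} and~\ref{lem:sca} together with the reconstruction in Equation~(\ref{reconstruct}).
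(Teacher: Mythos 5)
Your proof is correct and follows essentially the same route as the paper's: reduce to recovering the lca-equivalence relation on pairs via Equation~(\ref{reconstruct}), settle pairs sharing a leaf by Lemma~\ref{lem:scaspaneq} applied to the triple they span, and settle disjoint pairs by showing a connecting cross pair must exist, using Lemma~\ref{lem:sca} and the observation that the children containing $U_k$ and $U_l$ cannot both coincide with the child containing $U_i$. The only difference is presentational: you make explicit that each cross-pair comparison reduces to the shared-leaf case, a step the paper leaves implicit.
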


\begin{proof}
Let first $\{ U_i, U_j \}$ and $\{ U_i, U_k \}$ be two pairs with exactly one element, $U_i$, in common. To see whether they have the same lowest common ancestor in $\lambda$, it is sufficient to consider the tree induced by $\lambda$ on the set $\{ U_i, U_j, U_k \}$: it is known from Lemma \ref{lem:scaspaneq} that the pairs $\{ U_i, U_j \}$ and $\{ U_i, U_k \}$ have the same lowest common ancestor in $\lambda$ if and only if they have the same lowest common ancestor in $\lambda \sqcap \{ U_i, U_j, U_k \}$.

On the other hand, if two pairs are disjoint, there exists no set with only three elements containing both pairs. Still, considering $\{K_{ijk}\}$ with distinct $i, j, k \in \{1, \ldots, d\}$ turns out to be sufficient to verify the equivalence of the two disjoint pairs: the two pairs can only be equivalent if there is a third pair equivalent to both of them and having a nonempty intersection with each of them.

Indeed suppose first there exists a pair $\{ U_i, U_j\}$ having the same lowest common ancestor as the pair $\{ U_i, U_k\}$. Also suppose $\{ U_i, U_k\}$ has the same lowest common ancestor as $\{ U_k, U_l\}$. Then by transitivity $\{ U_i, U_j\}$ has the same lowest common ancestor as $\{ U_k, U_l\}$. 

Conversely, suppose that $\{ U_k, U_l\}$ and $\{ U_i, U_j\}$ have the same lowest common ancestor, $A$. Recall Lemma \ref{lem:sca}. Let $B_i, B_j, B_k, B_l$ be the children of $A$, to which $U_i, U_j, U_k, U_l$ belong, respectively. Since the lca of $U_i$ and $U_j$ is $A$, the pair $\{U_i, U_j\}$ must have a non-empty intersection with two different children of $A$ (Lemma \ref{lem:sca}). Hence, $B_i$ and $B_j$ must be disjoint, $B_i \cap B_j = \varnothing$. Similarly $B_k \cap B_l = \varnothing$. Then $B_k$ and $B_l$ cannot both be equal to $B_i$.

\begin{itemize}
  \setlength{\itemsep}{1pt}
  \setlength{\parskip}{0pt}
  \setlength{\parsep}{0pt}
\item If $B_k$ is different from $B_i$, then $U_i$ and $U_k$ belong to two different children of $A$, and the lowest common ancestor of $\{ U_i, U_k \}$ is $A$, too;
\item If $B_l$ is different from $B_i$, then, similarly, the lowest common ancestor of $\{ U_i, U_l \}$ is $A$, too.
\end{itemize}
In both cases, we have found a pair that is equivalent to $\{ U_i, U_j\}$ and $\{ U_k, U_l\}$ and that has a nonempty intersection with each of them.
\end{proof}

Given a nested Archimedean copula structure $\lambda$, it is thus always possible to break it down into a set of trivariate structures, one trivariate structure for each combination of the elements of $D_0$, taken three at the time without repetition. Proposition \ref{main:prop} states that this set of trivariate structures is sufficient to recover $\lambda$.

The idea to decompose a structure into smaller pieces that uniquely determine the structure is not new however. \cite*{ng1996} show that a given structure can be broken down into a set of triples and fans. A \emph{triple} is a tree with three leaves and two internal vertices. A \emph{fan} is a tree with only one internal vertex and at least three leaves (that is, a fan is a trivial tree with at least three leaves). A fan with $d$ leaves is called a $d$-fan.

Hereafter is a practical example on how to retrieve $\lambda$ from $^{3}(\lambda)$ when $d=4$. Suppose indeed the $\bigl(\begin{smallmatrix}
4 \\ 3
\end{smallmatrix} \bigr)=4$ elements of $^{3}(\lambda)$ are as shown in Figure \ref{recover4}.

\begin{figure}[H]
\centering
\begin{tabular}{cccc}

\includegraphics[width=0.215\textwidth]{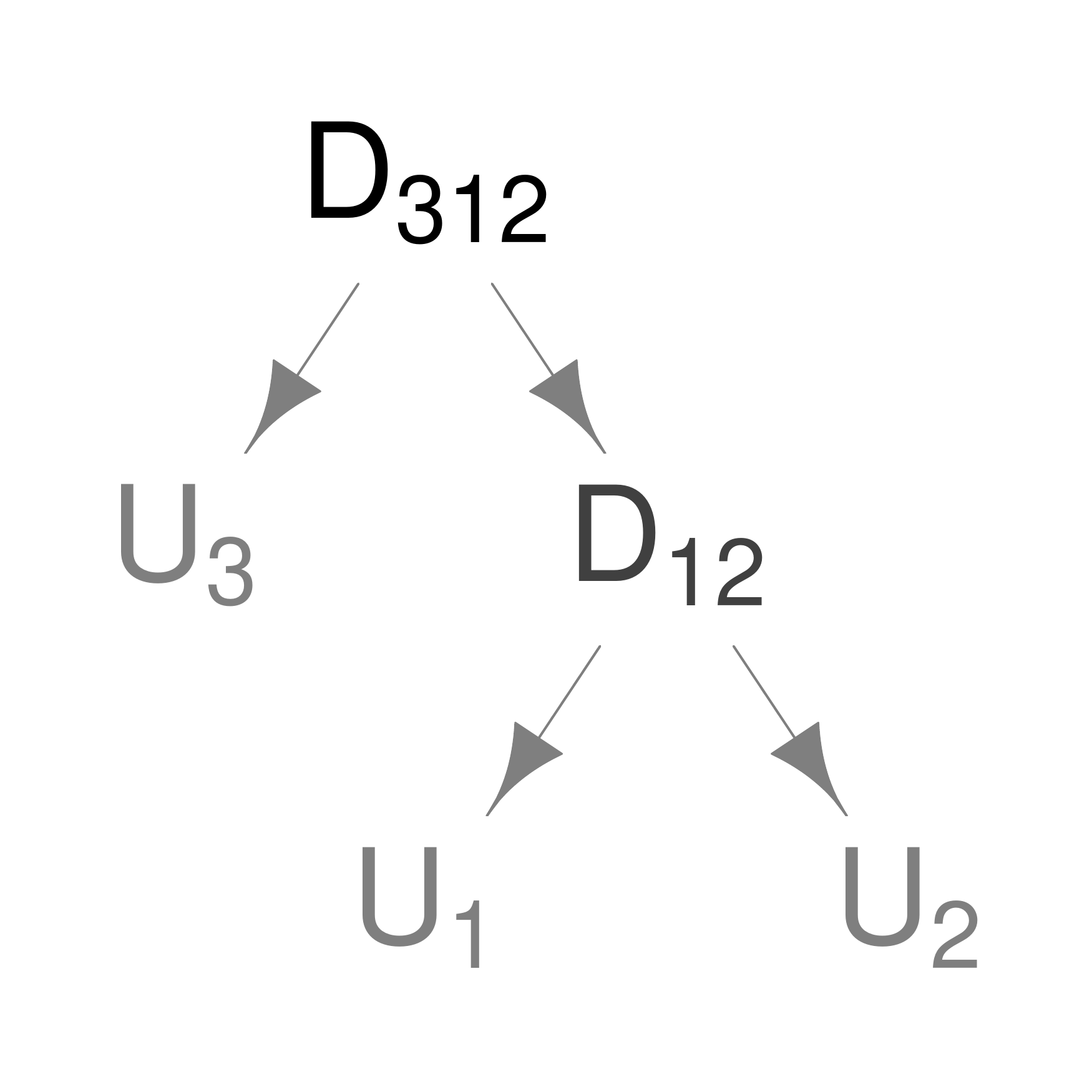}
&
\includegraphics[width=0.215\textwidth]{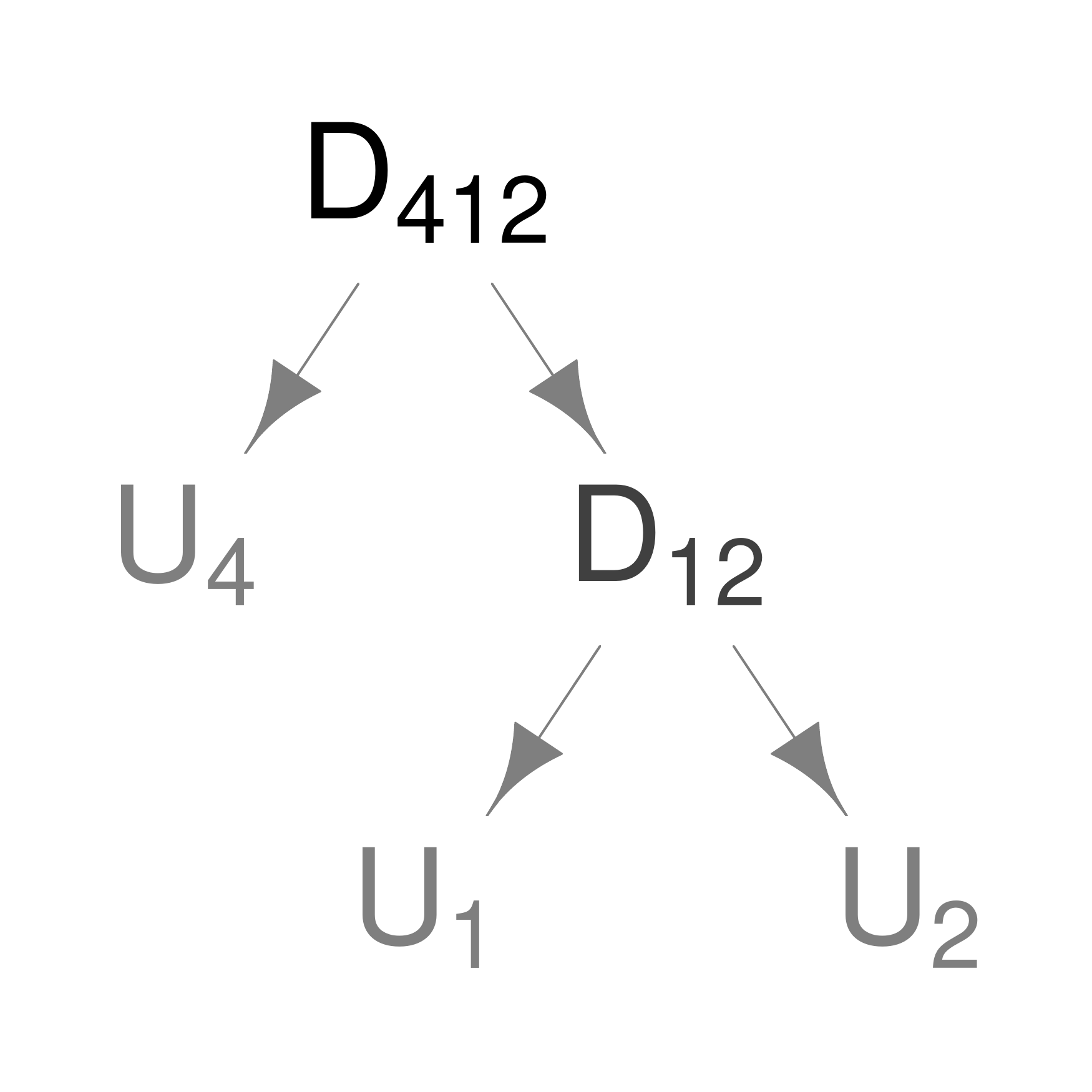}
&
\includegraphics[width=0.215\textwidth]{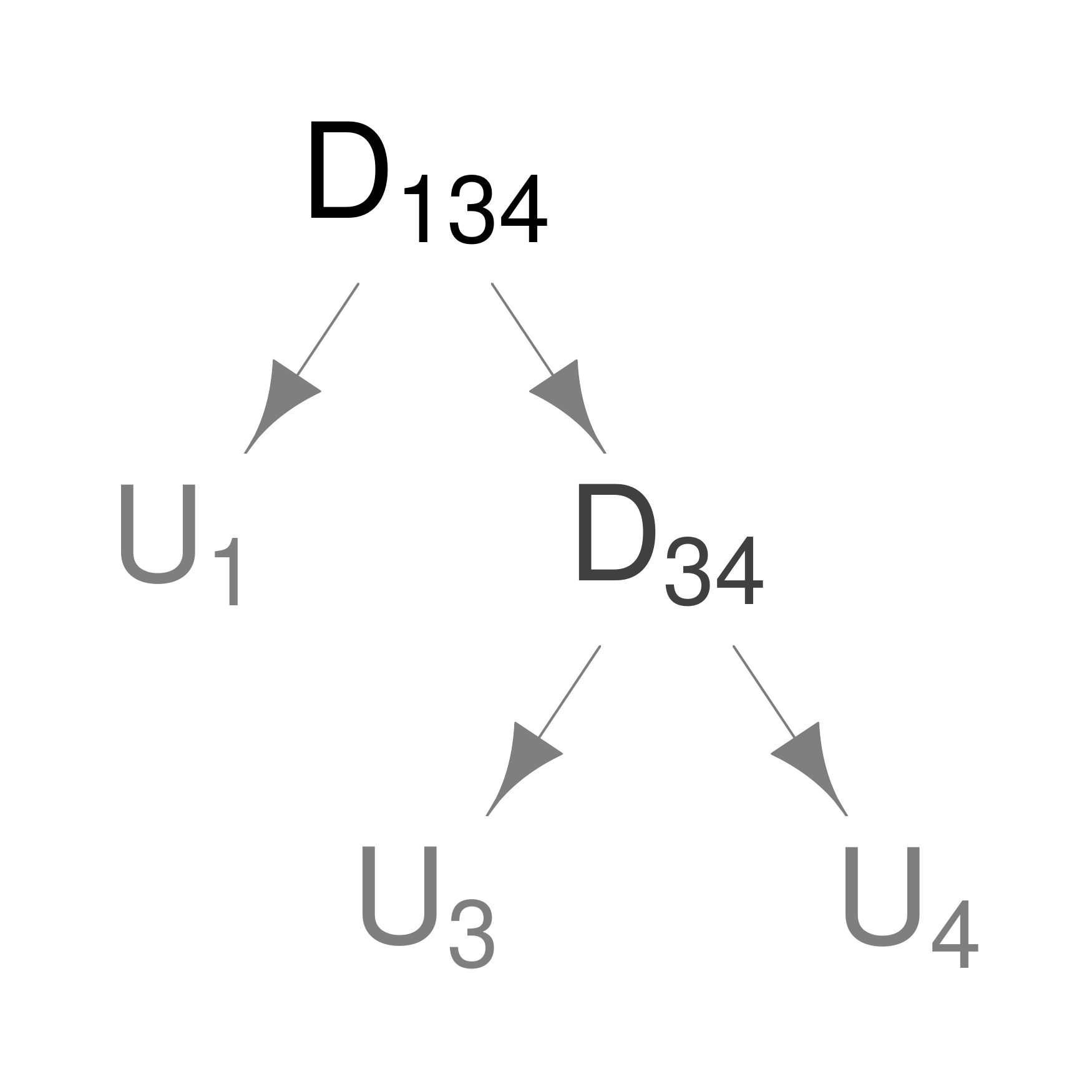}
&
\includegraphics[width=0.215\textwidth]{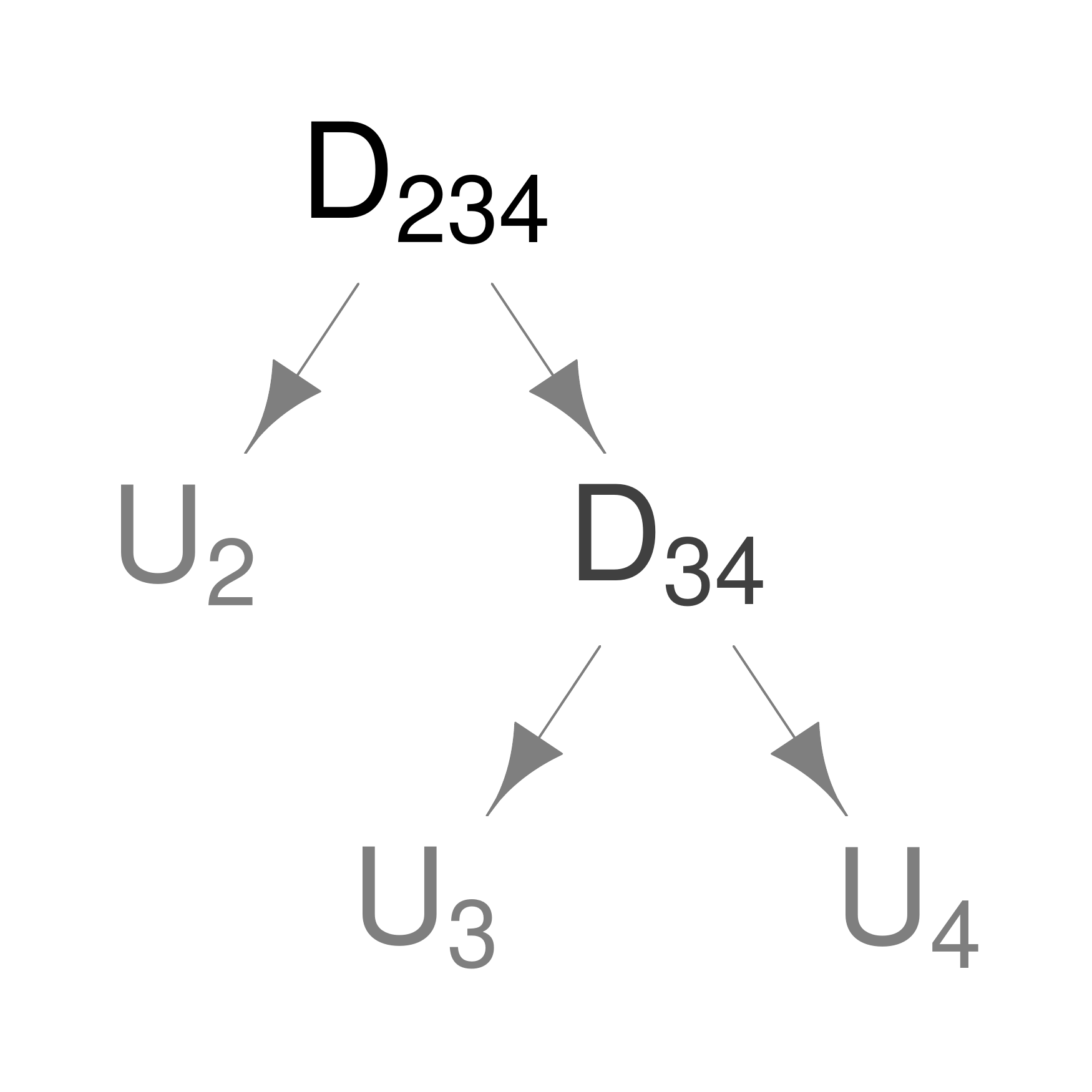}

\end{tabular}
\caption{A set of trivariate structures that uniquely determines the four-variate structure in Figure \ref{picture_tree}. Note the labels of the internal nodes are irrelevant. The lowest common ancestor of $U_1$ and $U_2$ could have been labeled $P$ instead of $D_{12}$ in the first structure and $G$ in the second structure. \label{recover4}}
\end{figure}

From Figure \ref{recover4}, we get that
\begin{itemize}
  \setlength{\itemsep}{1pt}
  \setlength{\parskip}{0pt}
  \setlength{\parsep}{0pt}
\item The lowest common ancestors of the pair $\{U_1, U_2\}$ are $\{D_{12}, D_{12}\}$;
\item The lowest common ancestors of the pair $\{U_1, U_3\}$ are $\{D_{321}, D_{134}\}$;
\item The lowest common ancestors of the pair $\{U_1, U_4\}$ are $\{D_{412}, D_{134}\}$;
\item The lowest common ancestors of the pair $\{U_2, U_3\}$ are $\{D_{312}, D_{234}\}$;
\item The lowest common ancestors of the pair $\{U_2, U_4\}$ are $\{D_{412}, D_{234}\}$;
\item The lowest common ancestors of the pair $\{U_3, U_4\}$ are $\{D_{34}, D_{34}\}$.
\end{itemize}

It appears therefore that $\{U_1, U_3\}, \{U_1, U_4\}, \{U_2, U_3\}$ and $\{U_2, U_4\}$ belong to the same equivalence class, while $\{U_1, U_2\}$ is by itself, as well as $\{U_3, U_4\}$. The branching nodes of $\lambda$ in this case are therefore $\{U_1, U_2, U_3, U_4\},\allowbreak \{U_1, U_2\}$ and $\{U_3, U_4\}$. The rooted tree structure $\lambda$ is thus as shown in Figure \ref{picture_tree}.

The general procedure for any $d \geq 4$ is as shown in Algorithm \ref{algo_retrieve}.


\begin{algorithm}[H]
\caption{How to retrieve $\lambda$ from $^{3}(\lambda)$ \label{algo_retrieve}}
\begin{algorithmic}
\FORALL{pairs $\{U_i, U_j\}$ such that $i < j$}
  \STATE{Get from $^{3}(\lambda)$ the set of lowest common ancestors. There should be $d-2$ lowest common ancestors available for each pair;}
\ENDFOR
\FORALL{pairs $\{U_i, U_j\}$ such that $i < j$}
  \STATE{Intersect the set of lowest common ancestors of the working pair with the other sets (one set for each other pair of variables). Any nonempty intersection means the two pairs are related, that is, belong to the same equivalence class. This also allows to determine the number of equivalence classes;}
\ENDFOR
\FORALL{equivalence classes}
  \STATE{Take the union of all pairs within each equivalence class to get the branching nodes of the structure. There are as many branching nodes as there are equivalence classes;}
\ENDFOR
  \STATE{Add the leaves to the branching nodes to get $\lambda$.}
\end{algorithmic}
\end{algorithm}

\section{Reconstruction of a NAC structure based on a set of estimated trivariate structures \label{problem}}

Let $\lambda$ be a NAC structure on a finite set $D_0=\{U_1, \ldots, U_d\}, d \geq 4$. It is known from Section \ref{sufficiency_triple} that if the tree spanned on any three distinct elements of $D_0$ is known (that is, each element of $^{3}(\lambda)$ is known), then $\lambda$ can be uniquely recovered, for instance using the algorithm at the end of the same section.

Our suggestion for \emph{estimating} $\lambda$ is therefore to estimate, using the procedure developed in Section \ref{test}, each element of $^{3}(\lambda)$, thus effectively getting $\widehat{^{3}(\lambda)}$ which can then be used to build $\hat{\lambda}$.

However if each element of $^{3}(\lambda)$ is estimated, the problem of reconstructing a tree from that set of estimated trivariate trees is a bit different than what was considered in Section \ref{sufficiency_triple}. Indeed it is not guaranteed that a proper nested Archimedean copula structure can be recovered from a given set of estimated trivariate structures, that is, $\widehat{^{3}(\lambda)}$ does not necessarily lead to a proper tree $\hat{\lambda}$. When $\hat{\lambda}$ retrieved from $\widehat{^{3}(\lambda)}$ is not a proper nested Archimedean copula structure, meaning it does not fulfill Definition \ref{def:tree}, we call $\widehat{^{3}(\lambda)}$ a \emph{faulty} set.

With a value of $\alpha$ equal to 0.00 for all tests required to estimate $^{3}(\lambda)$, we fail to reject the null hypothesis everywhere and we therefore get a set of estimated trivariate structures each describing a 3-fan. Such a set is never a faulty set, and $\hat{\lambda}$, the estimated NAC structure retrieved from it, will always be a trivial structure of dimension $d$, a $d$-fan. Of course if the true structure is not a $d$-fan, a value of $\alpha$ equal to 0.00 means you are sure to commit type II errors.

With a value of $\alpha$ equal to 1.00 for all tests, all null hypotheses are rejected and we end up with a set where each estimated trivariate structure describes a triple. Such a set can be a faulty set and usually is.

Assuming the copula of the vector $(X_1, \ldots, X_d)$ is a nested Archimedean copula, a faulty set of estimated trivariate structures means at least one error (type~I, type~II or type~III) has been committed. Notice the converse is not true: even when at least one type~I, type~II or type~III error has been committed, $\widehat{^{3}(\lambda)}$ might lead to a structure $\hat{\lambda}$ meeting Definition~\ref{def:tree}, however not equal to $\lambda$, the target structure. Suppose indeed the target structure is the structure on the right in Figure \ref{picture_tree}. We know from Section \ref{sufficiency_triple} that this structure is uniquely determined by the four triples shown in Figure \ref{recover4}. Suppose however that the first two triples in Figure \ref{recover4} are replaced by two 3-fans, that is, two type II errors have been committed during the estimation process. Yet, this leads to a proper four-variate structure, shown on the left in Figure \ref{four-variate_sim}, however unequal to the target structure.

A faulty set is essentially a red flag that should be viewed as an opportunity for correction. However what kind of corrective measure should be applied to such a set is not straightforward. As done in the simulation study, we simply suggest to decrease the value of $\alpha$ for all tests until the resulting set of estimated trivariate structures is not a faulty set anymore. At worst, $\alpha$ is to be decreased down to 0.00, we end up with a set of 3-fans, and $\hat{\lambda}$ is then a $d$-fan. This last strategy is certainly not the best one can imagine, but is a very convenient one to apply and ensures that $\hat{\lambda}$ will always be a proper tree.

Since the estimator developed in Section \ref{test} is unable to consistently estimate a 3-fan if we keep the same value of $\alpha>0$ for all $n$, it also means we will be unable to consistently estimate any $\lambda$ that has at least one trivial trivariate component, see for instance the simulation results from Figure \ref{four-variate_sim}.

\section{Simulation study \label{sim_sec}}

\subsection{Testing our method with samples from a 3-fan, a 5-fan and a four-variate structure containing two 3-fans.}

Let $(X_1, X_2, X_3)$ be a vector of random variables, the copula of which is Archimedean. We generate 500 samples of size $n$ from $(X_1, X_2, X_3)$ with the help of the \textsf{R} package \textbf{nacopula} \citep{Hofert:Maechler:2010:JSSOBK:v39i09}. \emph{Please note that this package has since been merged with the \textbf{copula} package}. With $\alpha$ arbitrarily set to 0.10, how many times among the 500 samples are we able to retrieve the 3-fan? Figure \ref{AC_sim} shows the percentage of correct estimates for various values of $n$, various generator families and two different values of the related parameter $\theta$, expressed as Kendall's $\tau$ coefficient for convenience according to Table \ref{archi_families}.

\begin{figure}[H]
\centering
\begin{tabular}{cc}

\includegraphics[width=0.45\textwidth]{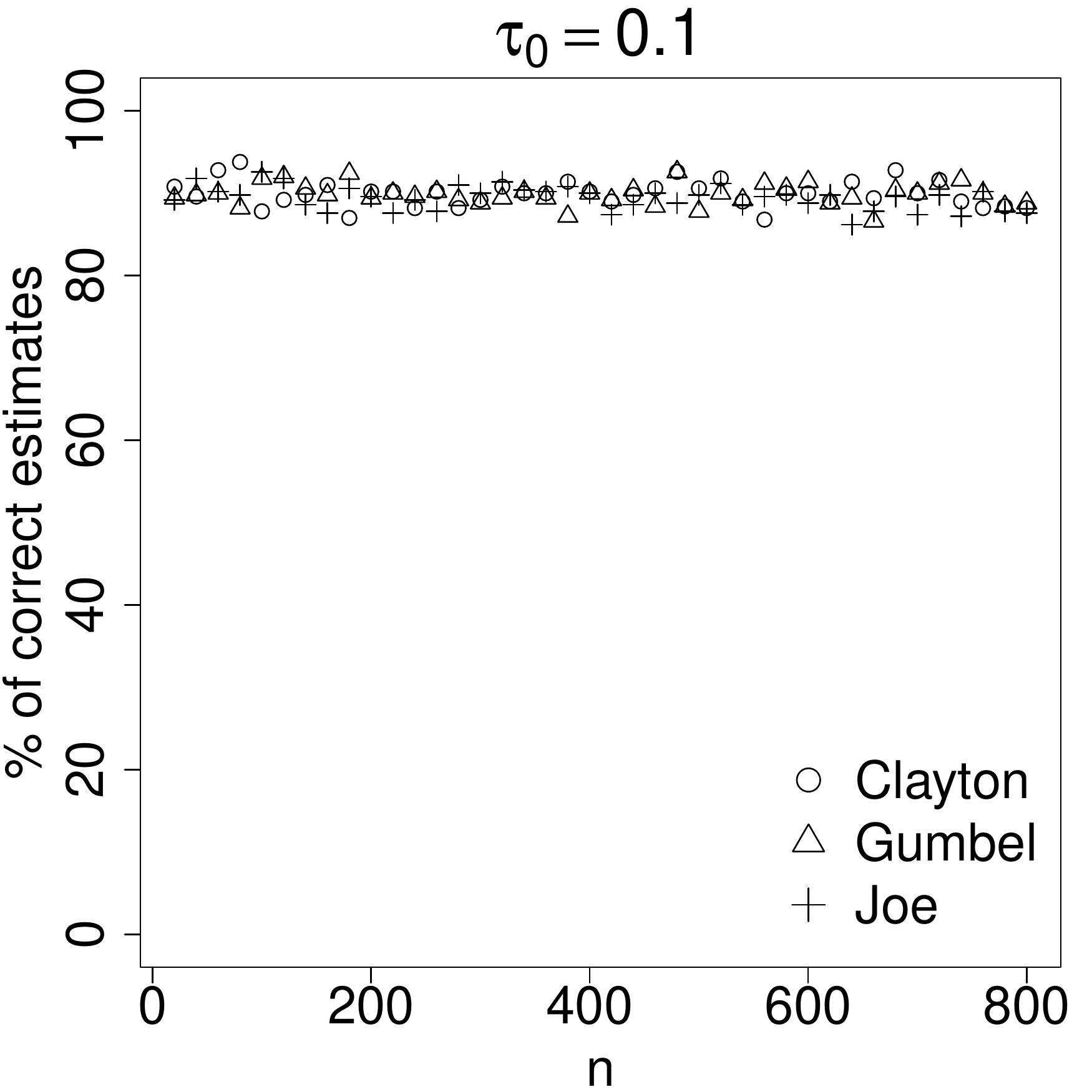}
&
\includegraphics[width=0.45\textwidth]{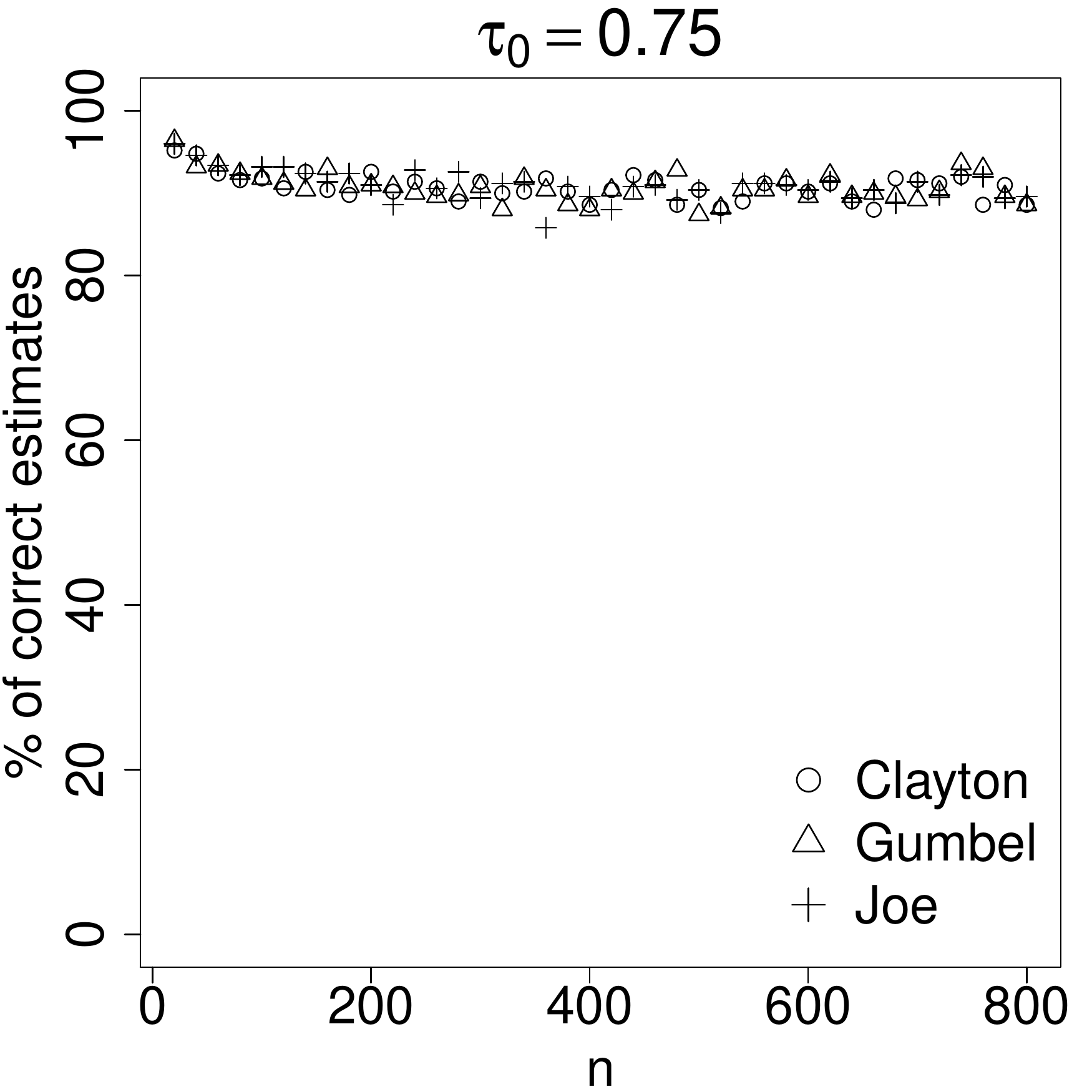}

\end{tabular}
\caption{Percentage of correct estimates when the true structure is the trivial trivariate structure. \label{AC_sim}}
\end{figure}

As expected, the percentage of correct estimates does not converge towards 100\% but oscillates around 90\%, that is, $(1-\alpha)\times 100\%$. Should we use a value for $\alpha$ of $2.5\%$ or $0.1\%$ for all $n$, the percentage of correct estimates would likewise oscillate around $97.5\%$ or $99.9\%$ respectively.

If we generate samples from a 5-fan structure (left-hand side of Figure \ref{five-variate_sim}), with $\tau_0$ arbitrarily set to 0.5 for all tested generator families and the same arbitrary value of $\alpha$ as before, the same lack of consistency can be observed, as shown on the right-hand side of Figure \ref{five-variate_sim}. Notice that the percentage of correct estimates in this case is near 100\%, even though $\alpha=0.1$. This excellent performance can be explained by the way faulty structures are handled: recall that the strategy is to decrease $\alpha$ until a valid structure emerges. At worst, $\alpha$ is decreased to $0\%$ and the estimated structure is then the trivial five-variate structure which happens to be the target structure in this case, meaning this rule of lowering $\alpha$ not only ensures that $\hat{\lambda}$ is always a proper tree but also improves the performance of our estimator for this particular case.

\begin{figure}[H]
\centering
\begin{tabular}{cc}

\includegraphics[width=0.25\textwidth]{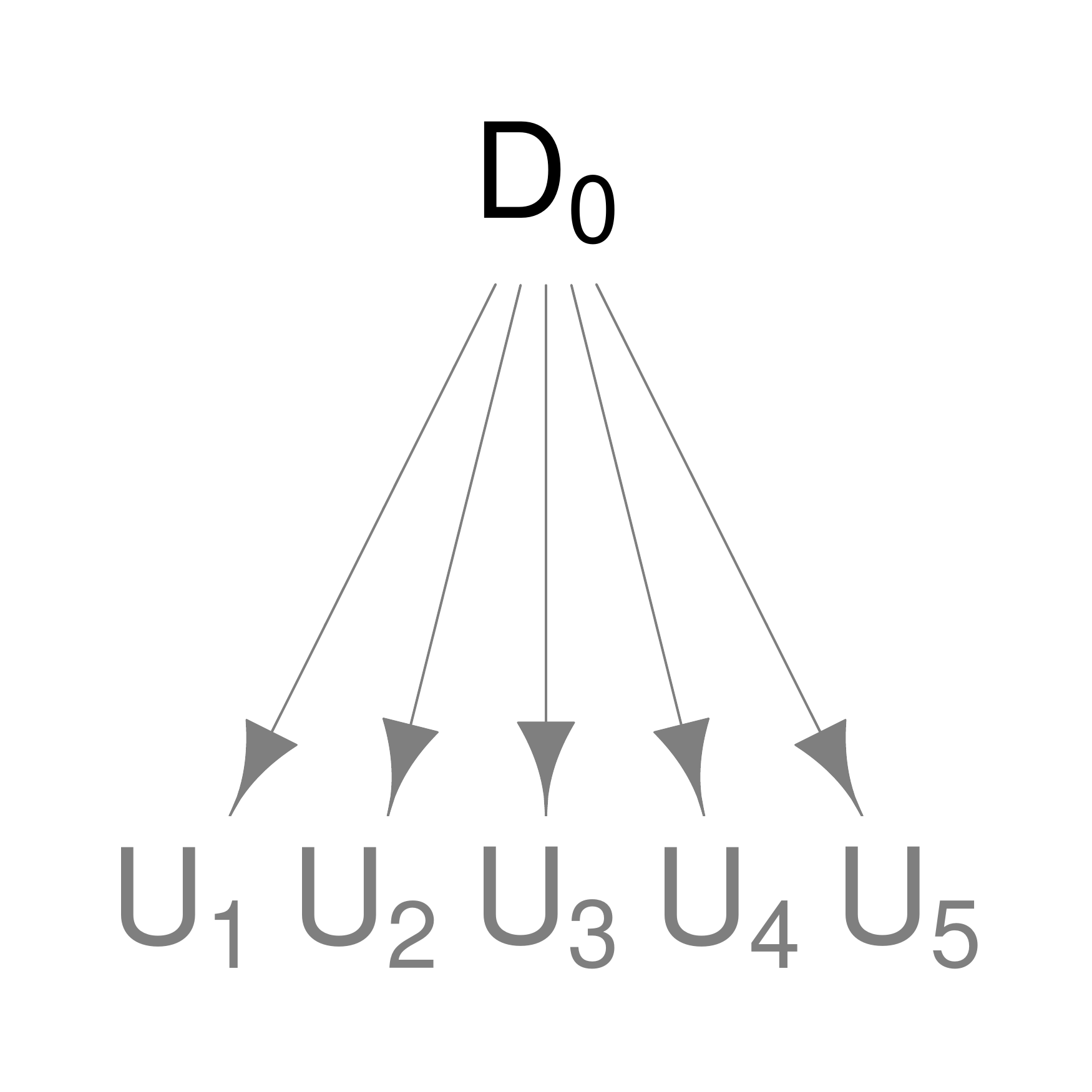}
&
\includegraphics[width=0.45\textwidth]{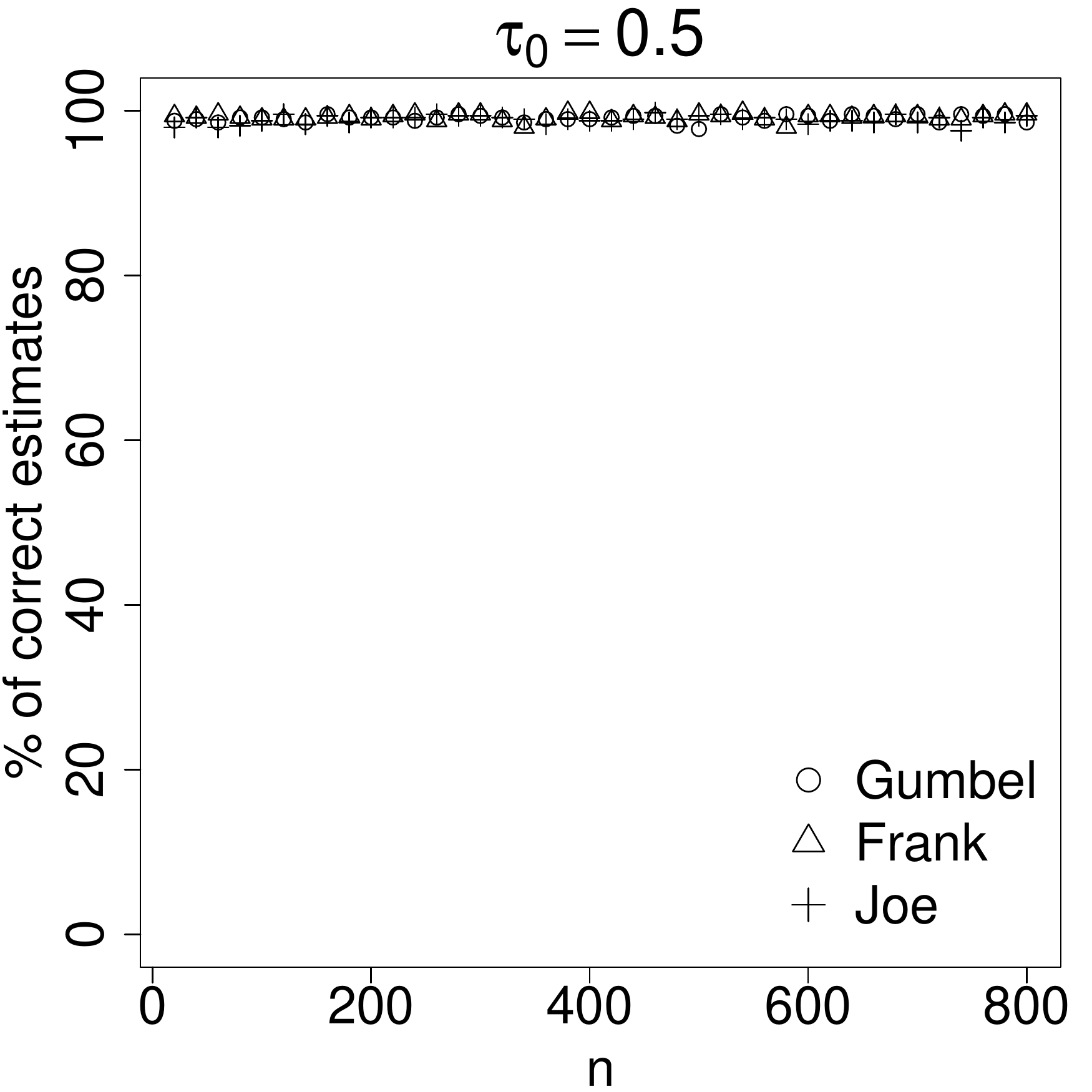}

\end{tabular}
\caption{Percentage of correct estimates for the trivial five-variate case.\label{five-variate_sim}}
\end{figure}

  

Finally, if we generate samples from the structure on the left-hand side of Figure \ref{four-variate_sim}, with $\tau_0=0.3$ and $\tau_{34}=0.7$ for all tested generator families (note that the same generator family is always used across all nodes of a given structure in the simulation section of this paper), we again can see a lack of consistency, as the percentage of correct estimates eventually oscillates around 97\% (right-hand side of Figure \ref{four-variate_sim}). Since two of the trivariate components of this structure are 3-fans, namely the structure of $(U_1, U_2, U_3)$ and the structure of $(U_1, U_2, U_4)$, this lack of consistency was, again, expected.

\begin{minipage}{\textwidth}
  \centering
  \raisebox{-0.5\height}{\includegraphics[width=0.25\textwidth]{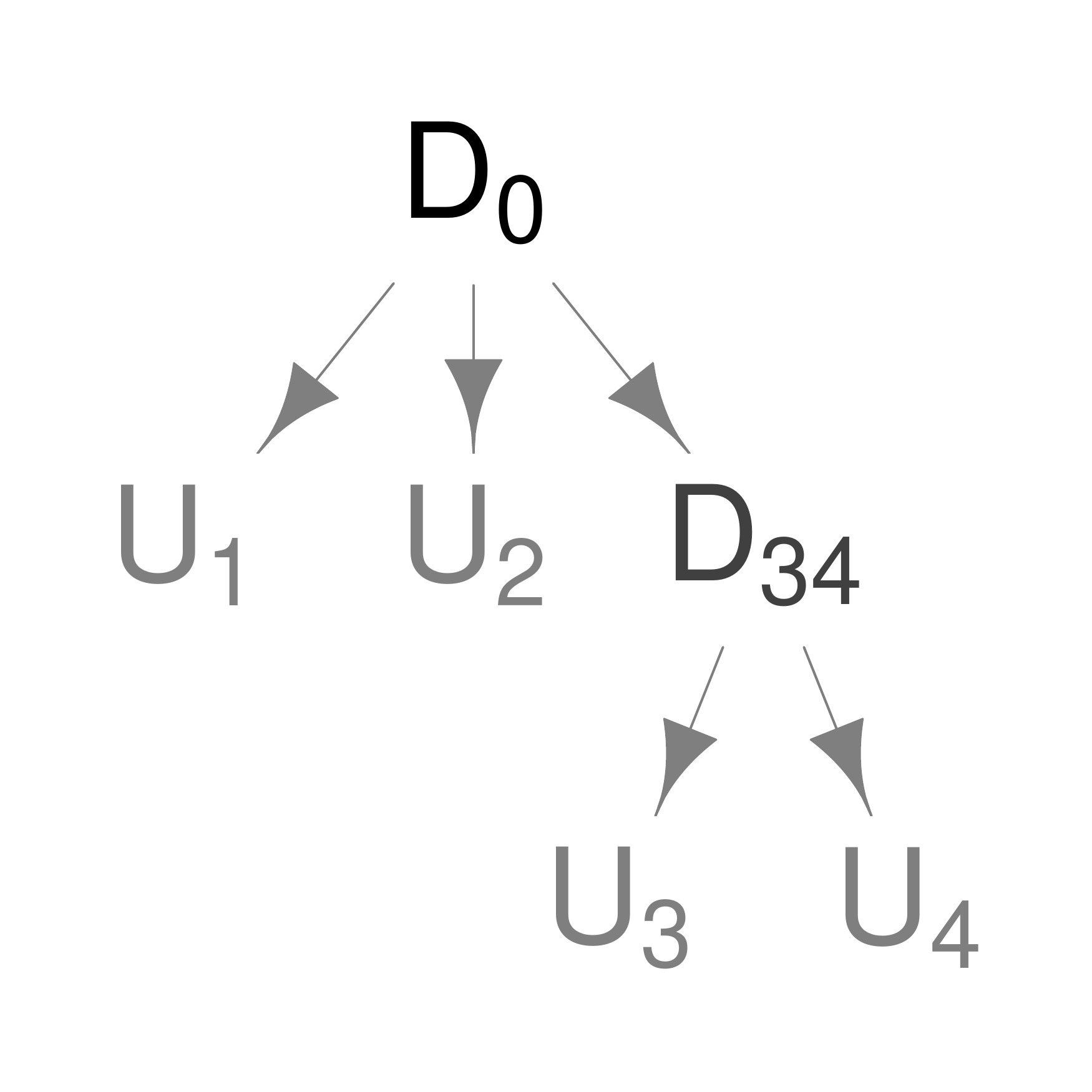}}
\hspace*{.2in}
  \raisebox{-0.5\height}{\includegraphics[width=0.45\textwidth]{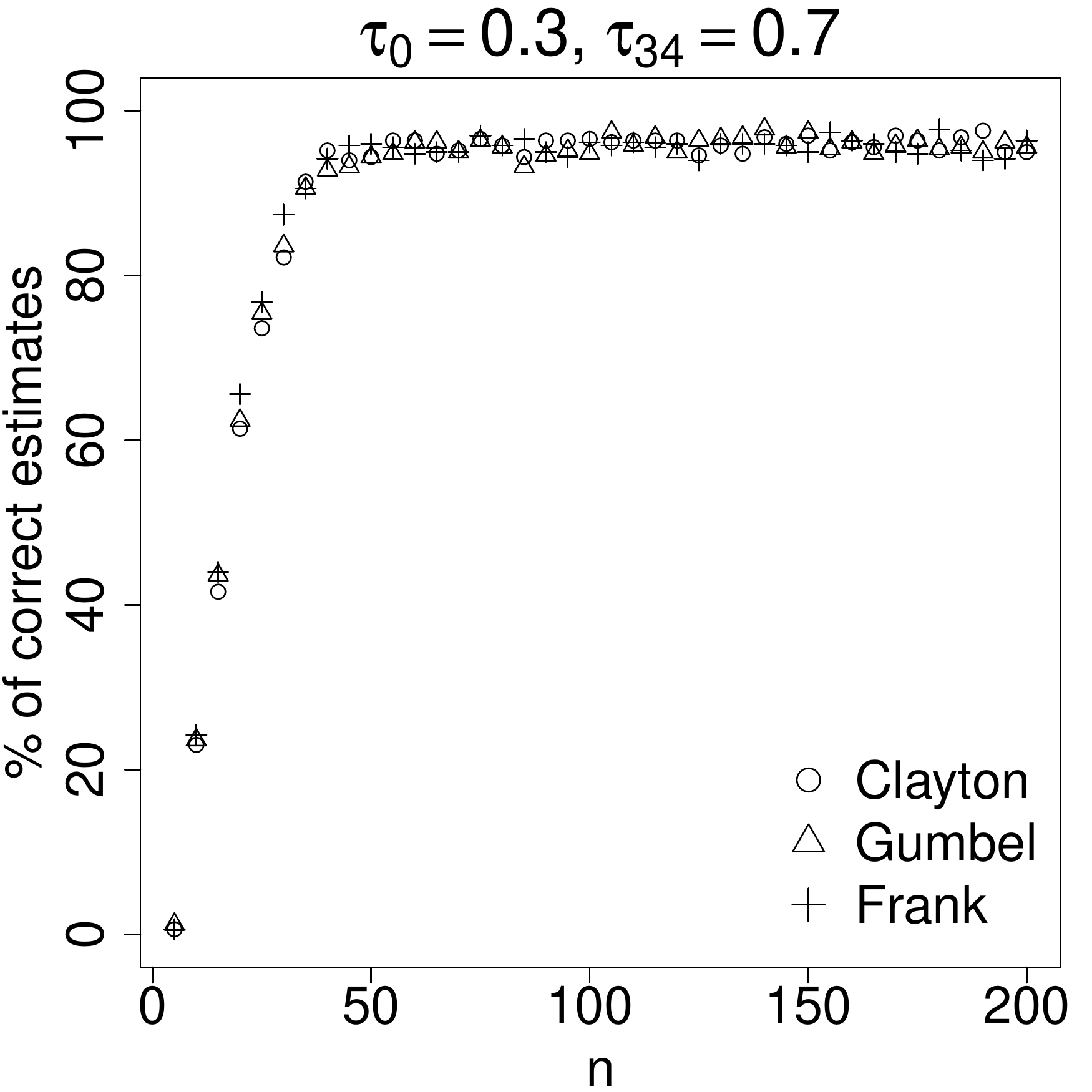}}
  
\end{minipage}
\begin{figure}[H]
\caption{Percentage of correct estimates for a four-variate case.\label{four-variate_sim}}
\end{figure}

In all these cases, it seems consistency can be achieved only if we let $\alpha$ tend to 0 as $n$ increases, in order to ensure type I errors are asymptotically impossible.

\subsection{Testing the method from \cite{OOW} with samples from a 3-fan, a 5-fan and a four-variate structure containing two 3-fans. \label{Ostapfan}}

In order to estimate a NAC structure, \cite{OOW} advise to use what they call the binary aggregated grouping with recursive estimation method, or RML method in short. Essentially, this approach consists in building a fully nested tree from bottom to top and then to aggregate some of the nodes of the resulting tree according to some criterion, so that the final estimated structure can possibly be something else than a fully nested tree.

To apply the RML method throughout the simulation section of this paper, we used the function \textbf{estimate.copula} of the \textsf{R} package HAC \citep{RePEc:hum:wpaper:sfb649dp2012-036}, this package being related to the work of \cite{OOW}. Since only the Clayton and Gumbel generator families are currently implemented in the HAC package, assessment of the RML method performance for other generator families is not possible at the time of writing. 

For the aggregation step in their approach, \cite{OOW} suggest several criteria. We used the only criterion currently implemented in the HAC package, namely that for any two successives nodes with estimated parameters $\hat{\theta}_I$ and $\hat{\theta}_J$ in the structure, the nodes have to be aggregated if $|\hat{\theta}_I-\hat{\theta}_J|<\epsilon$, where $\epsilon$ has to be chosen by the user. 

With a value for $\epsilon$ arbitrarily set to 0.30 for all $n$, Figure \ref{Ostap_sim1} displays the performance of their method for the estimation of a 3-fan.

\begin{figure}[H]
\centering
\begin{tabular}{cc}

\includegraphics[width=0.45\textwidth]{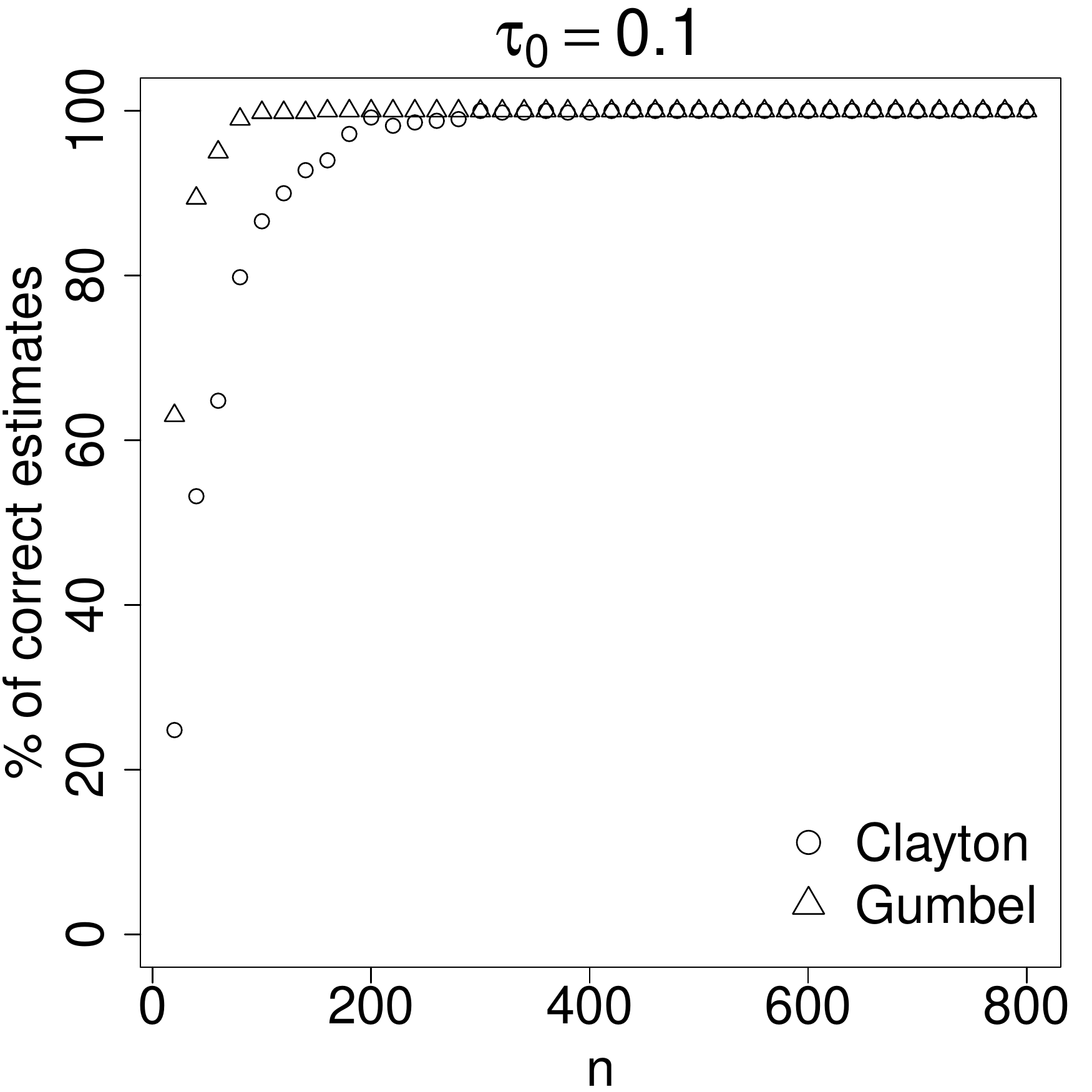}
&
\includegraphics[width=0.45\textwidth]{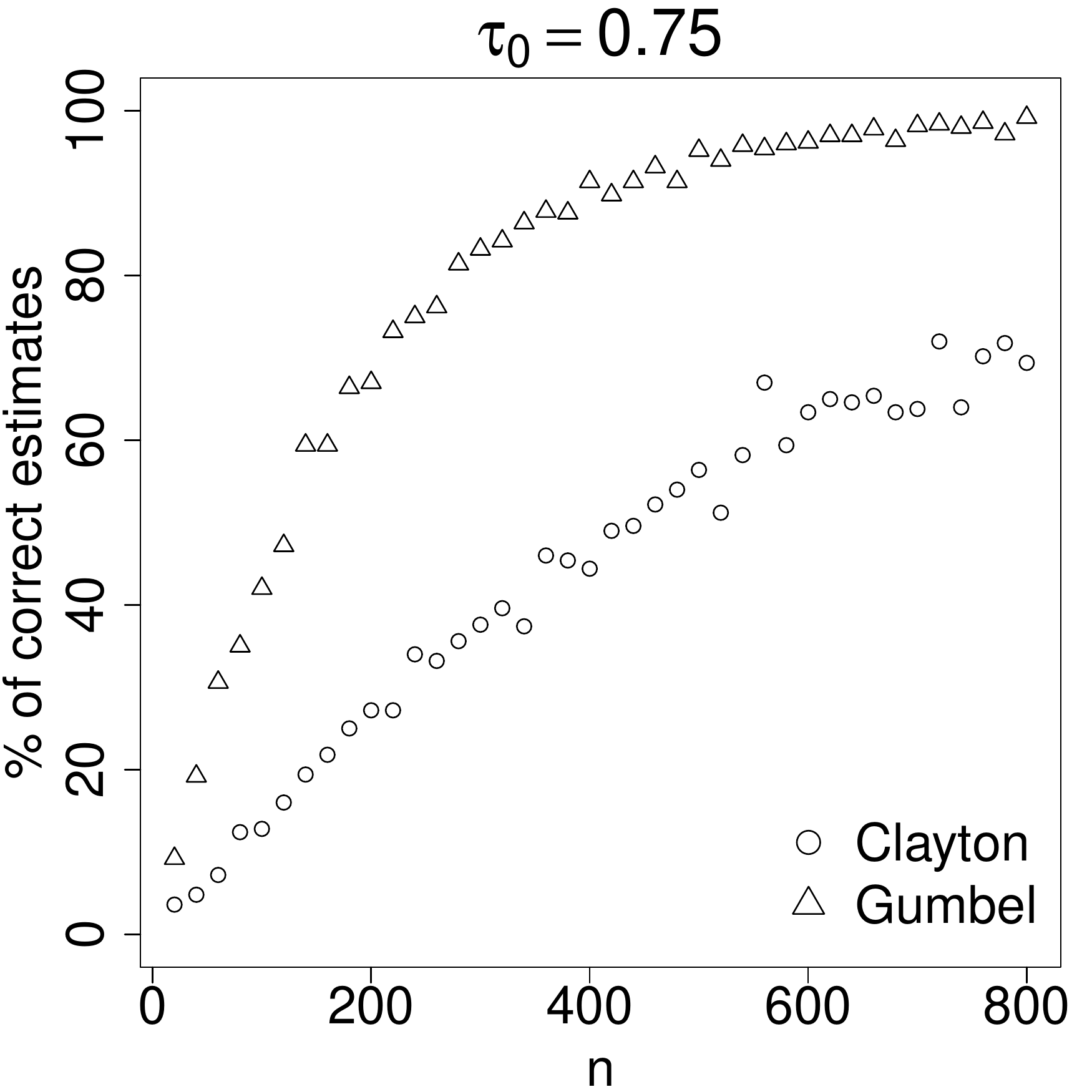}

\end{tabular}
\caption{Performance of the RML method by \cite{OOW} for a $3$-fan, with $\epsilon=0.30$ as threshold for aggregation.\label{Ostap_sim1}}
\end{figure}

Increasing the value of $\epsilon$ for all $n$ typically improves the performance of their estimator in the case of a 3-fan, as it increases the chances of aggregation. Lowering the value of $\epsilon$ typically deteriorates the performance of their estimator for this case. At the limit, with $\epsilon$ set to 0.00, no aggregation is done at all, and their estimator is unable to estimate correctly the trivial trivariate structure studied here. These remarks hold if the samples are generated from a 5-fan.

The case of the structure on the left-hand side of Figure \ref{four-variate_sim} is a little more complex to investigate. Their estimator is indeed able to consistently estimate this structure for $\epsilon$ set to 0.15, 0.30 or 0.60, but not for $\epsilon$ set to 5.00 for instance.

\subsection{The case of samples coming from a triple or from a seven-variate structure made up only of triples}

Given 500 samples of size $n$ from a non-trivial trivariate structure (a triple) such as the one in the left of Figure \ref{picture_tree} and $\alpha=0.10$, how many times among the 500 samples are we able to retrieve this triple with our method? Figure \ref{NAC_sim} shows the percentage of correct estimates for various values of $n$ and various generator families (again, note that the same generator family is always used across all nodes of a given structure in the simulation section of this paper). The parameters $\theta_0$ (root node, $D_0$) and $\theta_{23}$ (the other branching node, $D_{23}$) are expressed as Kendall's $\tau$ coefficients for convenience.

\begin{figure}[H]
\centering
\begin{tabular}{cc}

\includegraphics[width=0.45\textwidth]{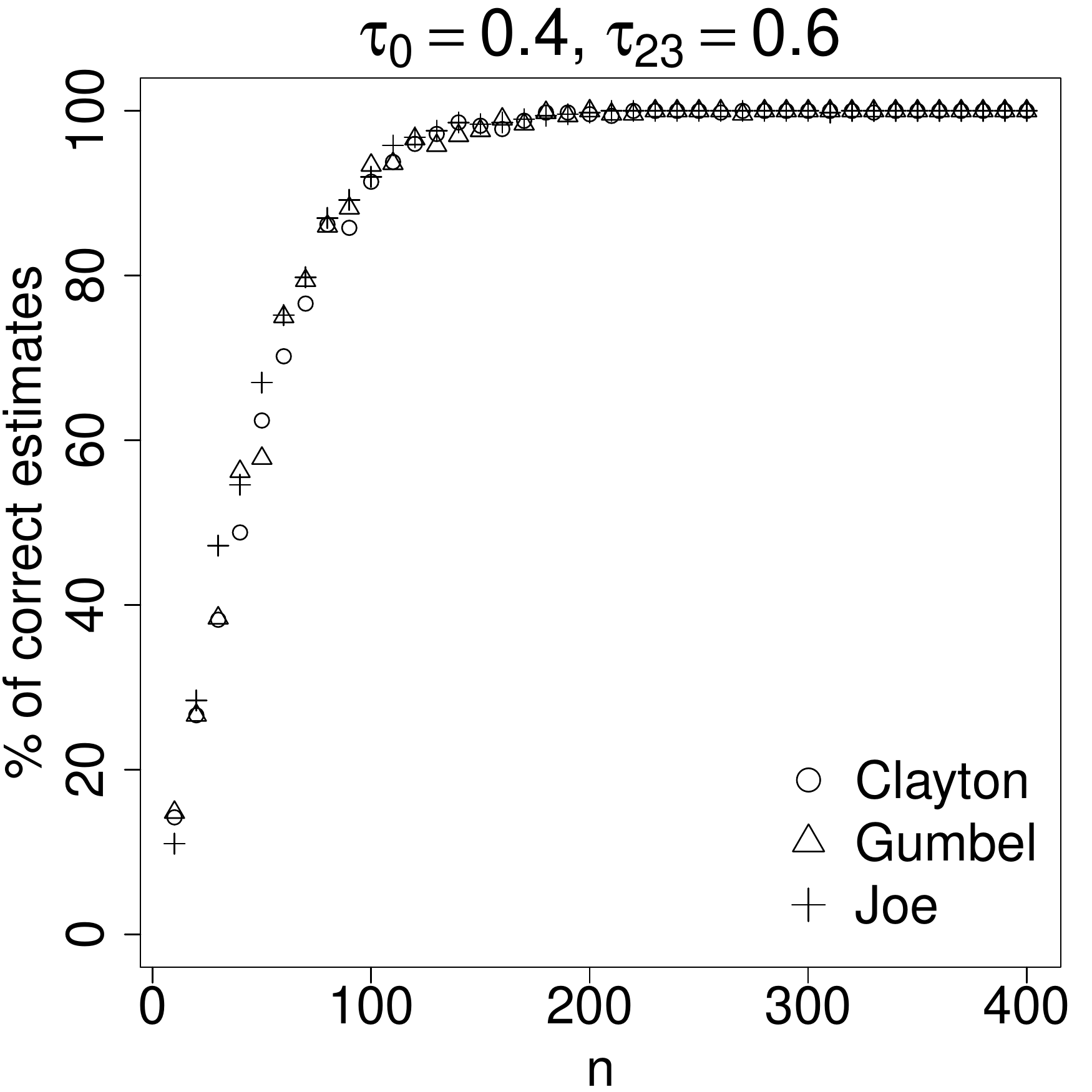}
&
\includegraphics[width=0.45\textwidth]{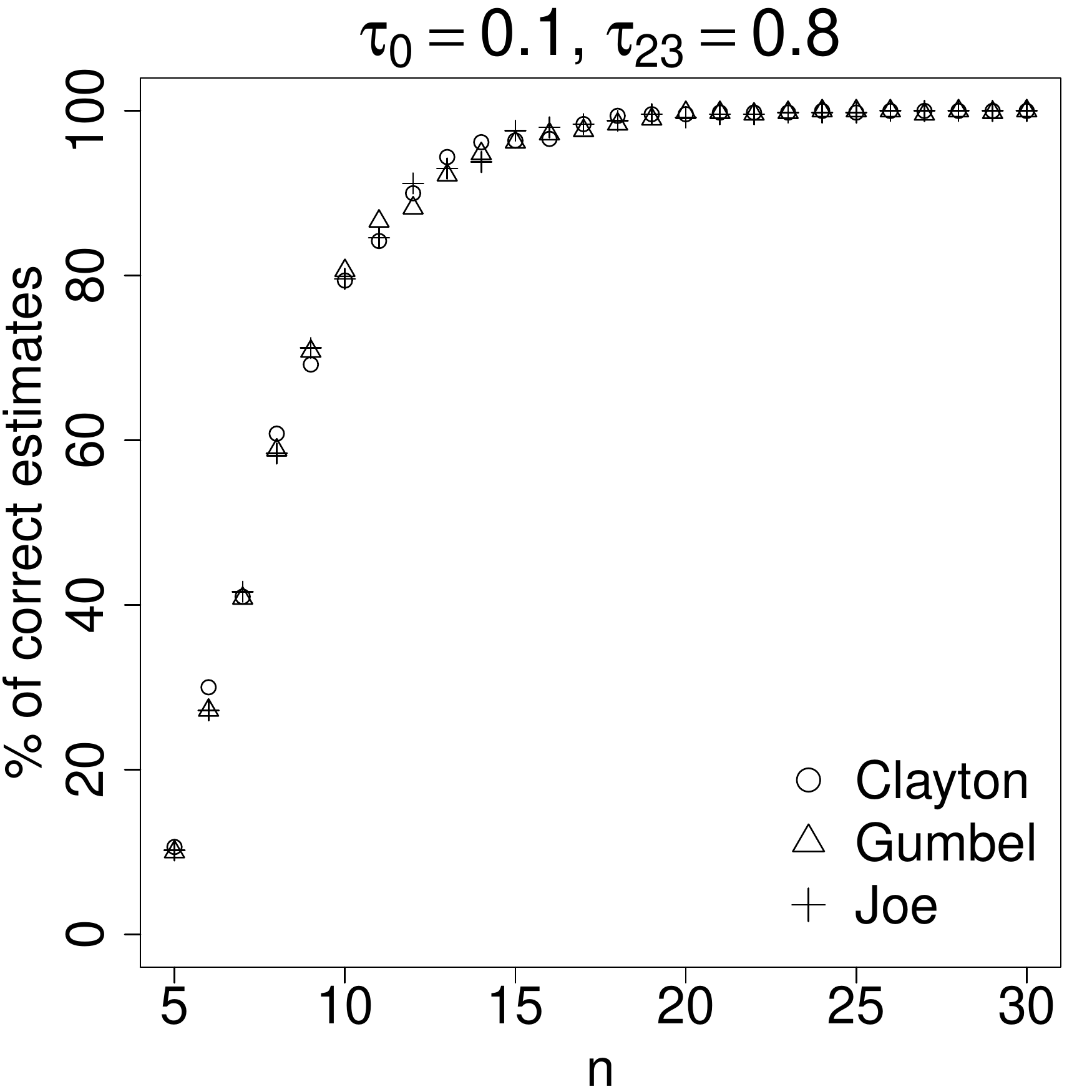}

\end{tabular}
\caption{Percentage of correct estimates when $d=3$ and true structure is $\lambda_{23}$. \label{NAC_sim}}
\end{figure}

As the sample size increases, there is a clear convergence towards 100\% of correct estimates. The more apart $\tau_0$ and $\tau_{23}$, the faster the convergence towards 100\% of correct estimates (compare the two horizontal axes in Figure \ref{NAC_sim}). These results strongly suggest our estimator, at least when $\alpha=10\%$, is a consistent estimator for any non-trivial trivariate NAC structure and thus for any larger NAC structure made up only of triples. Indeed, if the samples are generated from the seven-variate structure such as the one on the left-hand side of Figure \ref{sevenvariate_sim}, with $\tau_0=0.1$, $\tau_{123}=0.3$, $\tau_{23}=0.6$, $\tau_{4567}=0.3$, $\tau_{567}=0.5$ and $\tau_{67}=0.8$ for all tested generator families, the proportion of correct estimates grows to 100\% as $n$ increases (right-hand side of Figure \ref{sevenvariate_sim}).

\begin{minipage}{\textwidth}
  \centering
  \raisebox{-0.5\height}{\includegraphics[width=0.45\textwidth]{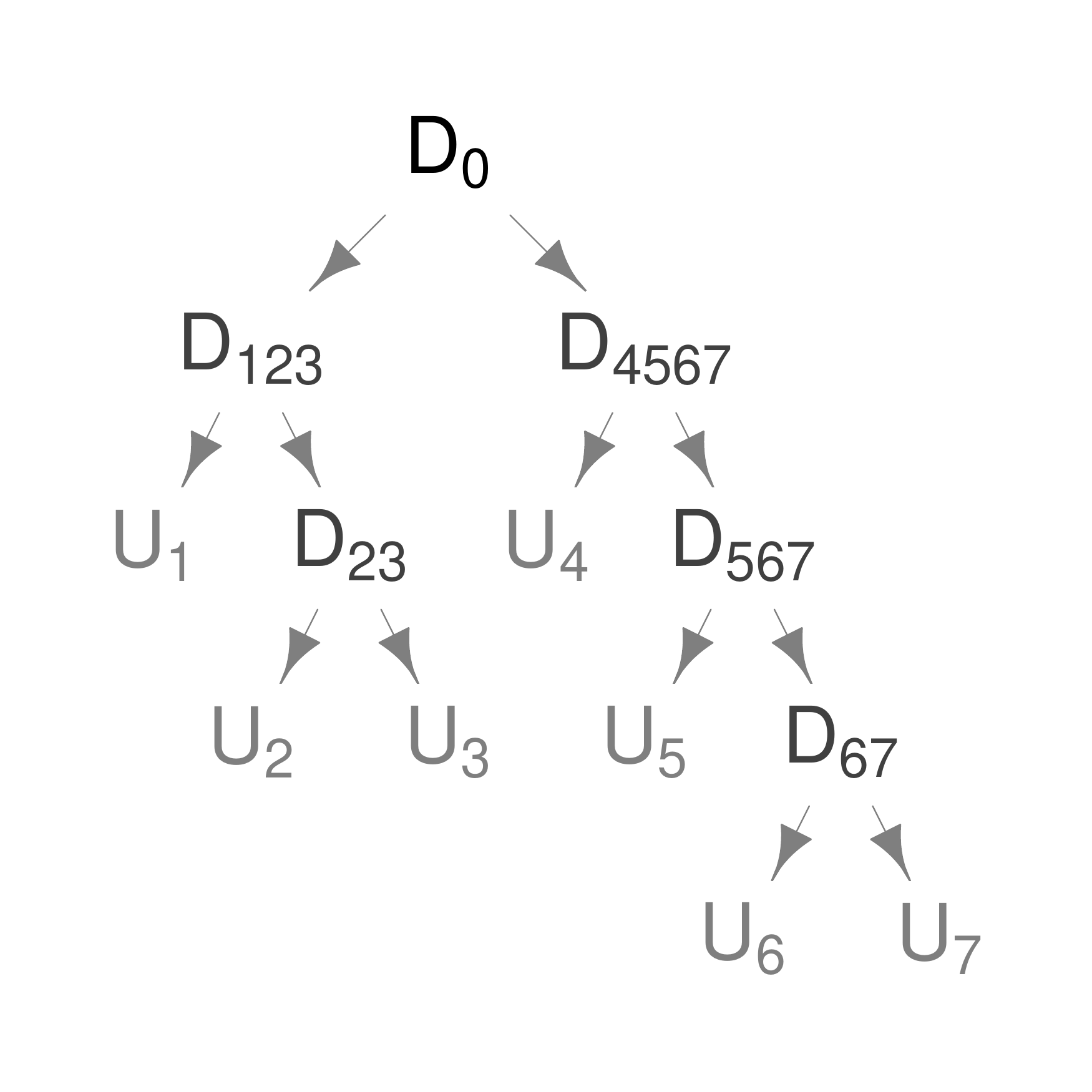}}
\hspace*{.2in}
  \raisebox{-0.5\height}{\includegraphics[width=0.45\textwidth]{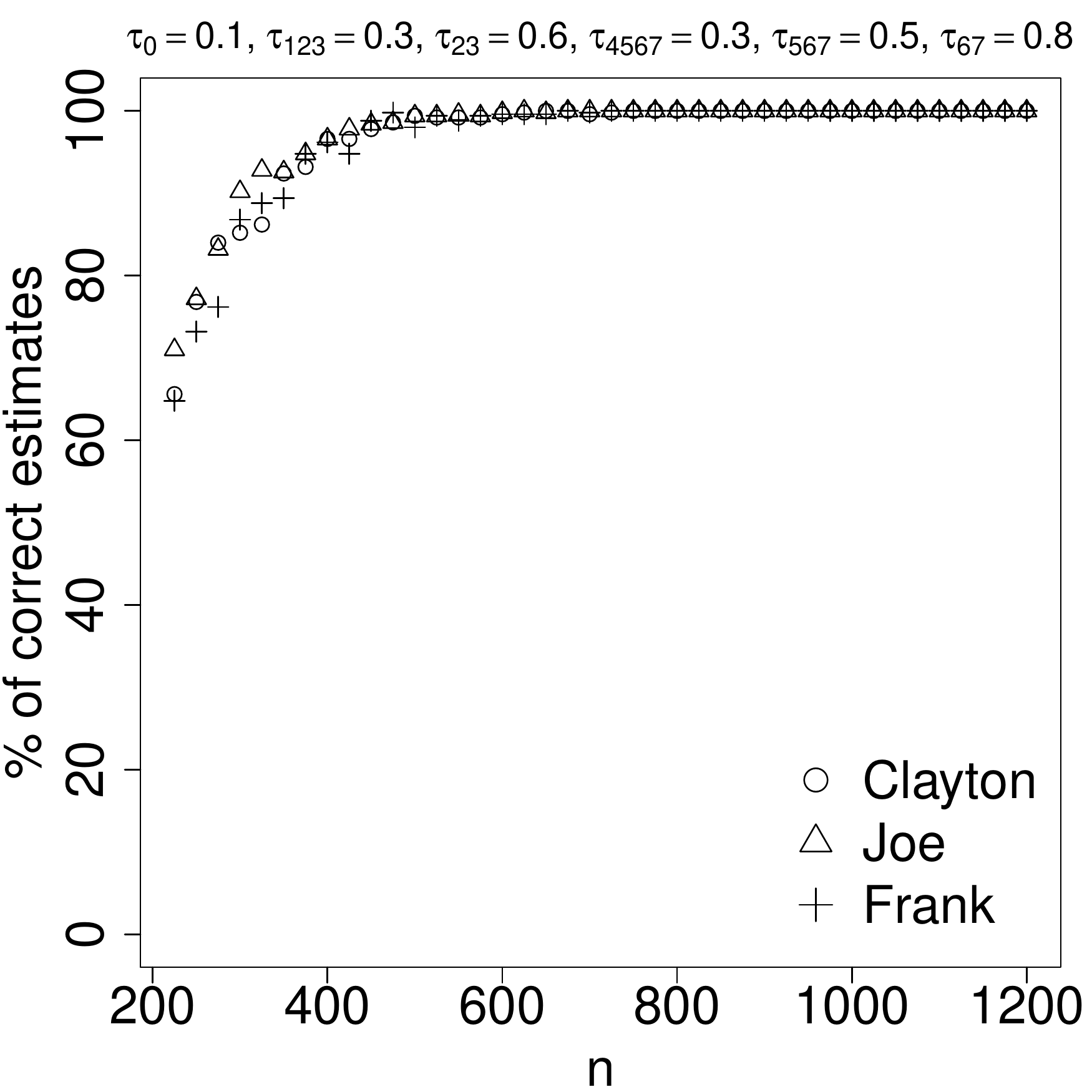}}
\begin{figure}[H]
\caption{Percentage of correct estimates (right) for a seven-variate structure (shown on the left). \label{sevenvariate_sim}}
\end{figure}
\end{minipage}

Increasing the value of $\alpha$ for all $n$ actually further improves the performance of our estimator for both structures, the best performance possible being delivered when $\alpha$ is set to its upper limit, that is, $\alpha=100\%$. If $\alpha$ is set to its lower limit, that is $\alpha=0\%$, our estimator becomes unable to estimate correctly any of the two structures studied here.

When the target structure is a triple, we found that the percentage of correct estimates also converges towards 100\% by using the RML method from \cite{OOW} as we did in Subsection \ref{Ostapfan}, provided the value of $\epsilon$ is small enough. In fact, as any aggregation should be avoided in case the target structure is a triple, the performance of their estimator typically improves by lowering $\epsilon$ for all $n$, the best performance being delivered when $\epsilon=0$, that is, when the aggregation step is completely skipped. Should the value of $\epsilon$ be too large, then their estimator will fail to be a consistent estimator for the non-trivial trivariate structure.

In case the target structure is a triple, Figure \ref{limit} allows for a direct comparison between our approach and their approach when both are pushed to their favorable respective limit, thus with $\alpha=100\%$ for our method and with $\epsilon=0$ for the method of \cite{OOW}.

\begin{figure}[H]
\centering
\begin{tabular}{cc}

\includegraphics[width=0.45\textwidth]{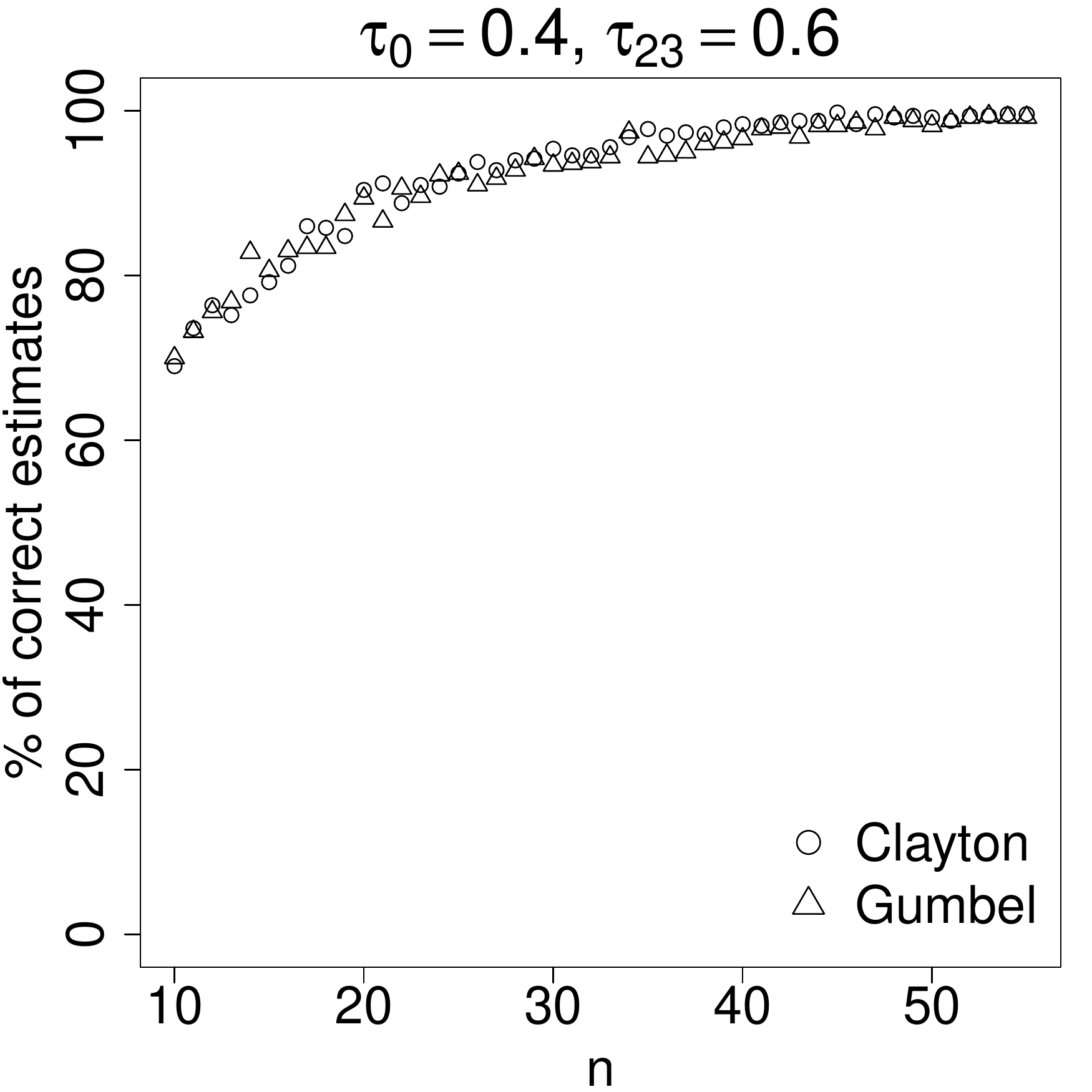}
&
\includegraphics[width=0.45\textwidth]{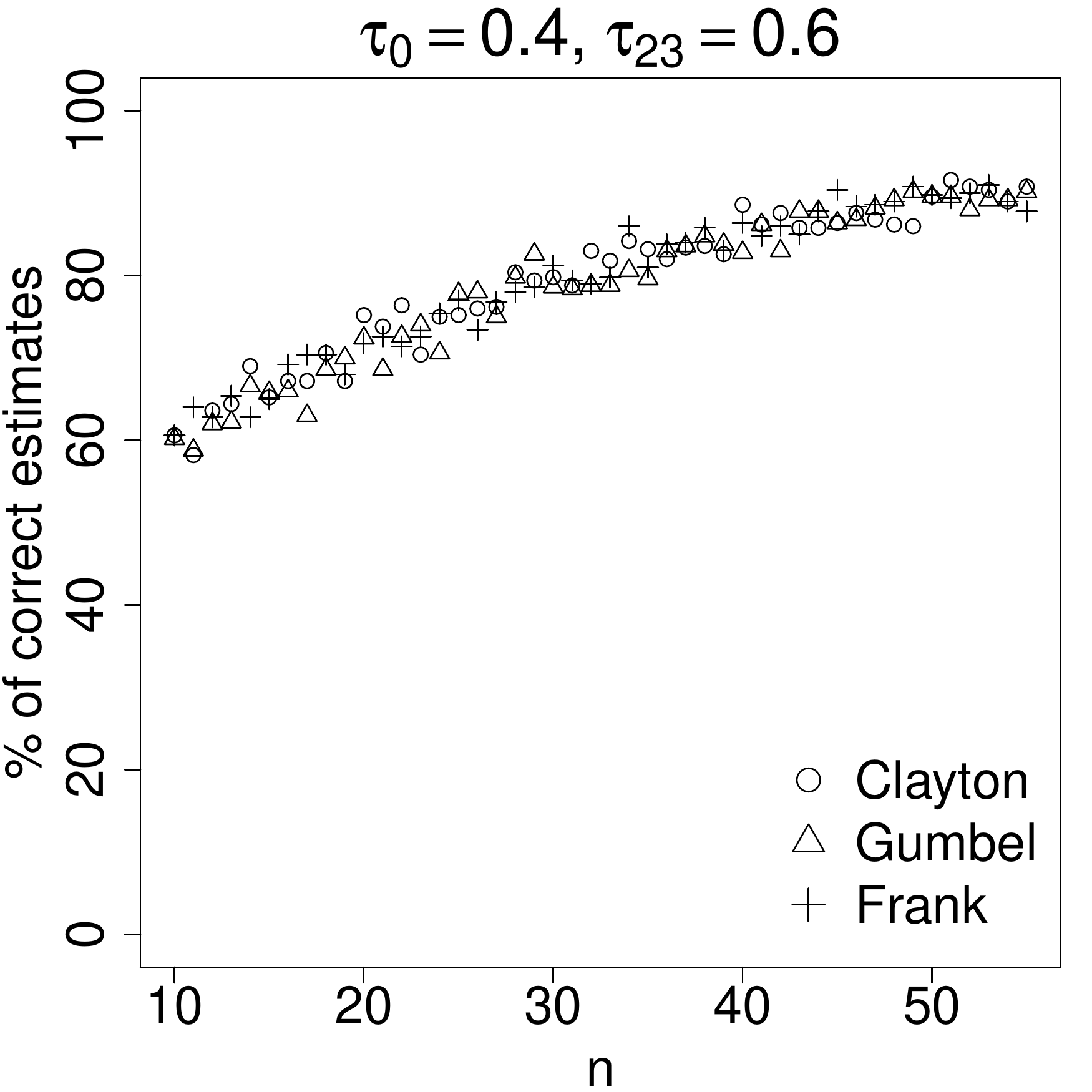}

\end{tabular}
\caption{Percentage of correct estimates when $d=3$ and true structure is $\lambda_{23}$. Left is the RML method for Clayton and Gumbel generators, right is the method described in this paper for Clayton, Gumbel and Frank generators, the latter generator being an arbitrary choice. Both methods were pushed to their respective limit in order to deliver the best performance possible for structure $\lambda_{23}$. \label{limit}}
\end{figure}

There is a performance gap between the two methods. Recall however that the RML method was applied with the prior knowledge that the generators were Clayton and Gumbel generators while our method does not require such prior knowledge.

\section{Application \label{app_sec}}
Daily log returns from January 2010 to December 2012 of the following indices were gathered with the help of Yahoo! Finance: 
\begin{itemize}
  \setlength{\itemsep}{1pt}
  \setlength{\parskip}{0pt}
  \setlength{\parsep}{0pt}
\item Abercrombie \& Fitch Co. (ANF), traded in New York;
\item Amazon.com Inc. (AMZN), traded in New York;
\item China Mobile Limited (ChM), traded in Hong Kong;
\item PetroChina (PCh), traded in Hong Kong;
\item Groupe Bruxelles Lambert (GBLB), traded in Brussels;
\item and KBC Group (KBC), traded in Brussels.
\end{itemize}
For each of these six time series, we fitted a GARCH(1,1) model with generalized error distribution and extracted the residuals, that is, we divided each of the six original time series by the related estimated standard deviations, leading to a table of $n = 740$ observations and $d = 6$ columns. These new six time series we will call the GARCH(1,1)-standardized log returns. We performed a Ljung-Box test (lag 20) on each of the six GARCH(1,1)-standardized log returns as well as on each of the six squared GARCH(1,1)-standardized log returns and failed to reject the null hypothesis of zero serial correlation each time. A chi-squared test was also performed on each of the six GARCH(1,1)-standardized log returns to check if the generalized error distribution assumed for the residuals in the GARCH(1,1) model is warranted. Again, we failed to reject the null hypothesis each time.

Figure \ref{therest} shows the estimated structure for the GARCH(1,1)-standardized log returns of ANF, AMZN, ChM and PCh, the estimated structure for ANF, AMZN, GBLB and KBC, and the estimated structure for ChM, PCh, GBLB and KBC.

\begin{figure}[H]
\centering
\begin{tabular}{ccc}

\includegraphics[width=0.3\textwidth]{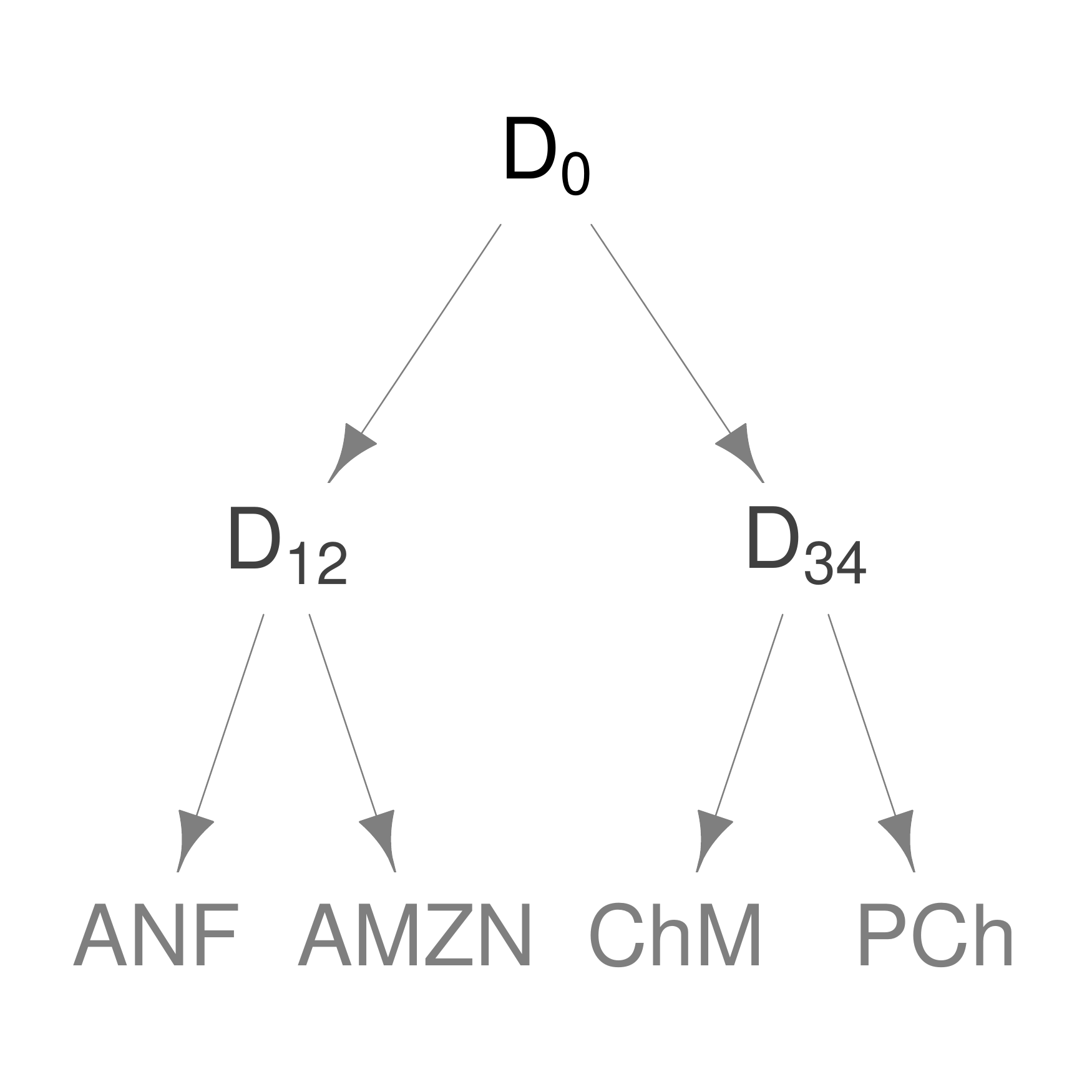}
&
\includegraphics[width=0.3\textwidth]{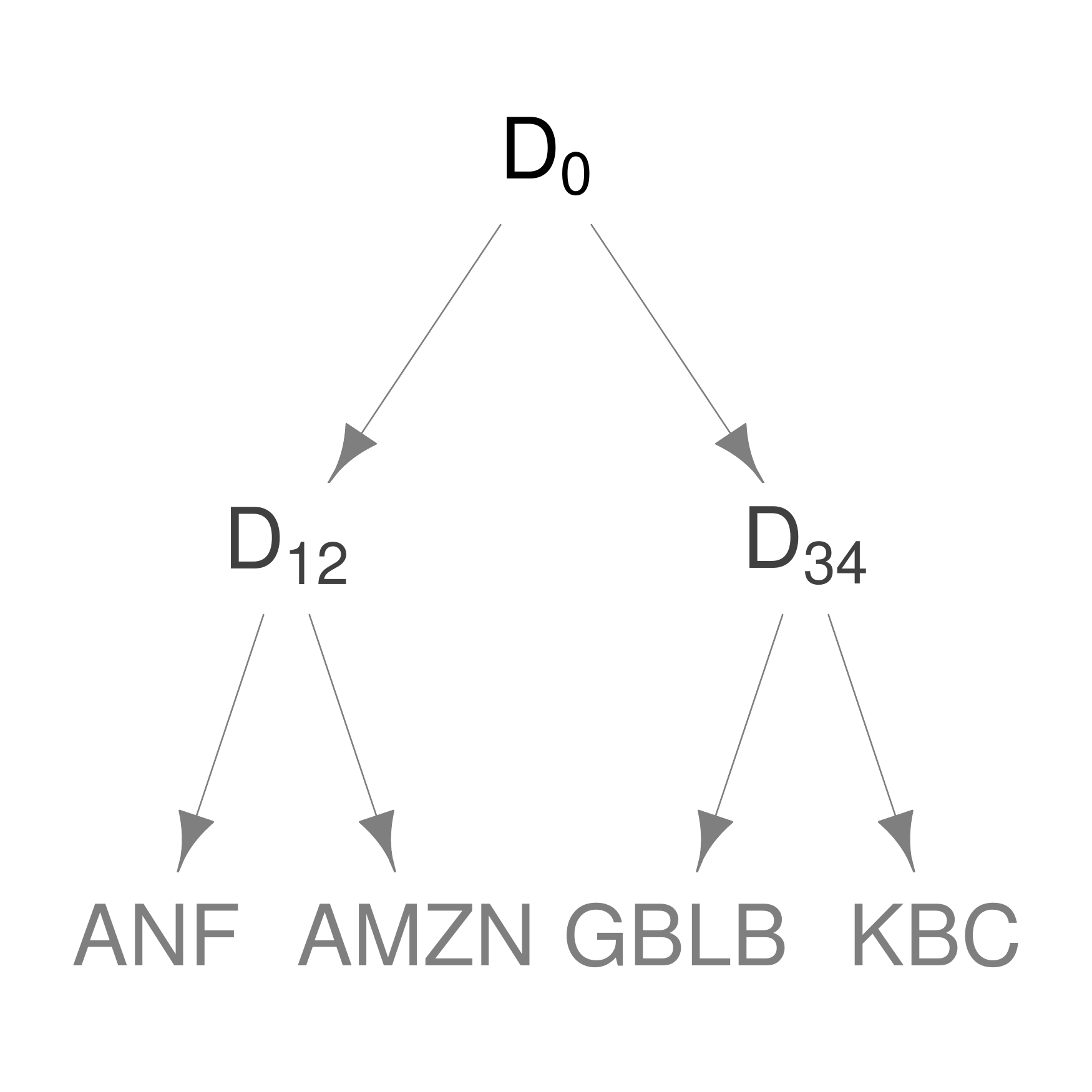}
&
\includegraphics[width=0.3\textwidth]{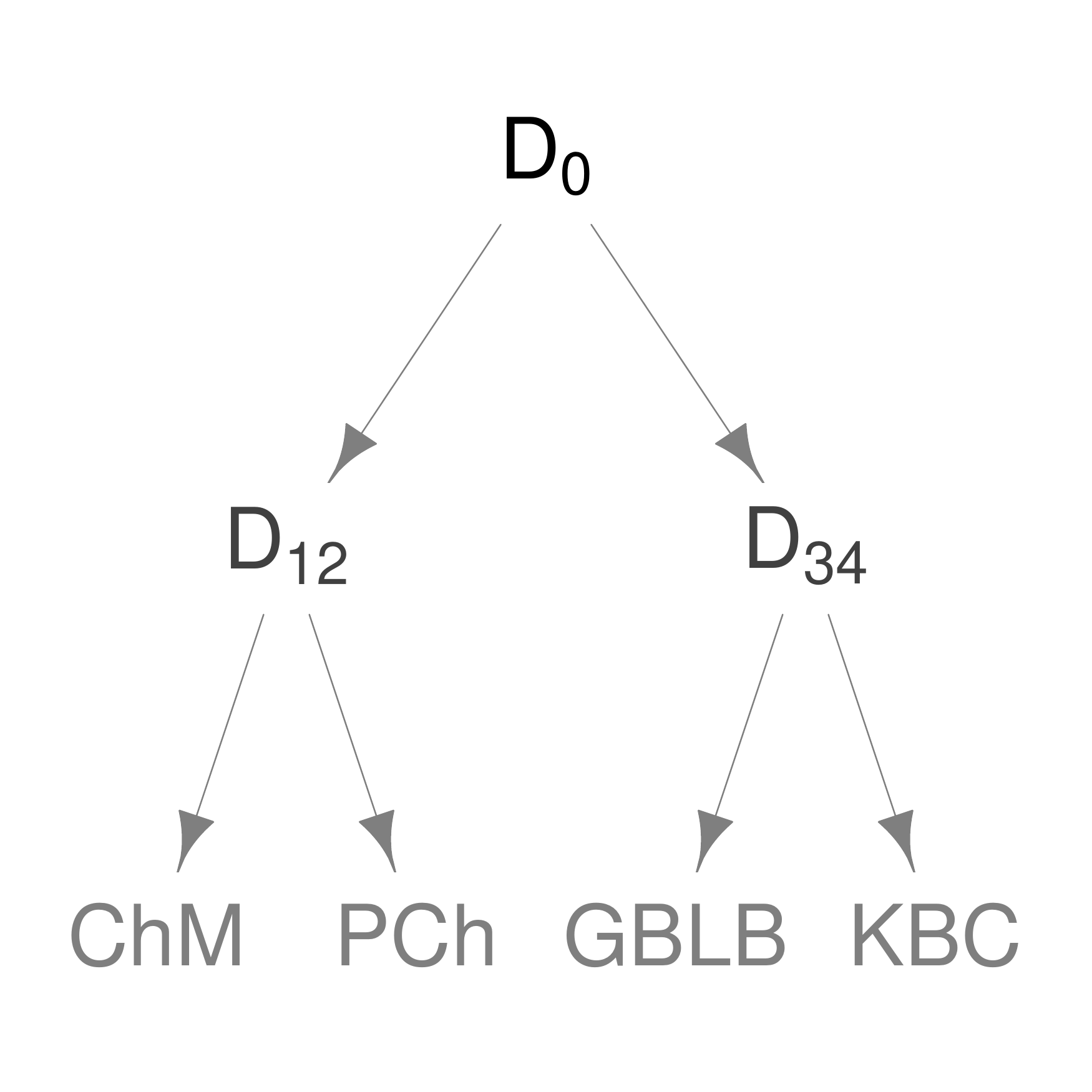}

\end{tabular}
\caption{Given two series of GARCH(1,1)-standardized log returns from one geographical area and two from another area, a natural clustering by area arises. The above structures are all strongly supported by the data, as the 12 related p-values are less than 10e-04. \label{therest}}
\end{figure}

In order to build a six-variate structure, we need to estimate eight extra trivariate structures. The left-hand side of Figure \ref{wanted} shows a reasonable guess for the six-variate structure in which the eight extra trivariate structures all are 3-fans.

\begin{figure}[H]
\centering
\begin{tabular}{cc}

\includegraphics[width=0.49\textwidth]{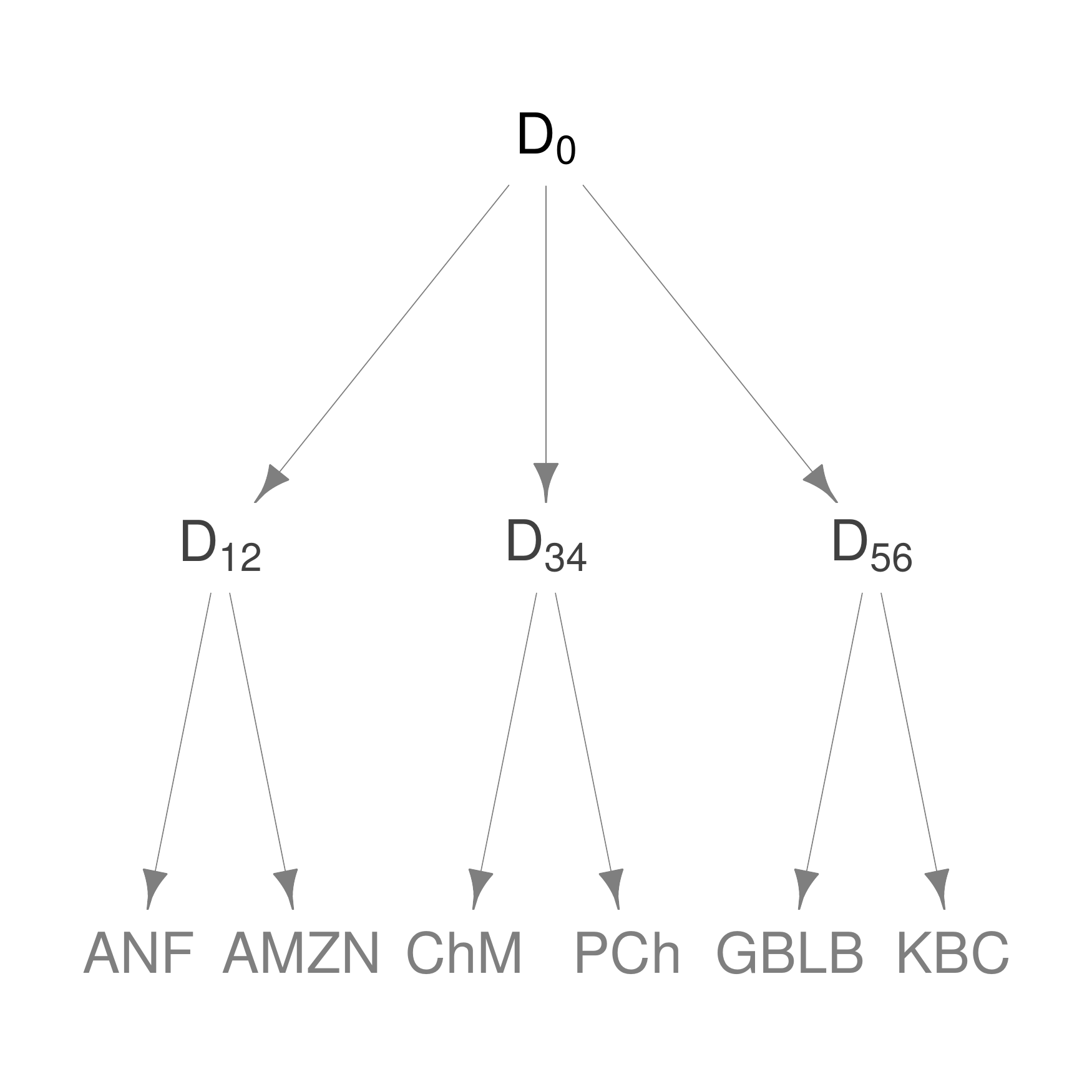}
&
\includegraphics[width=0.49\textwidth]{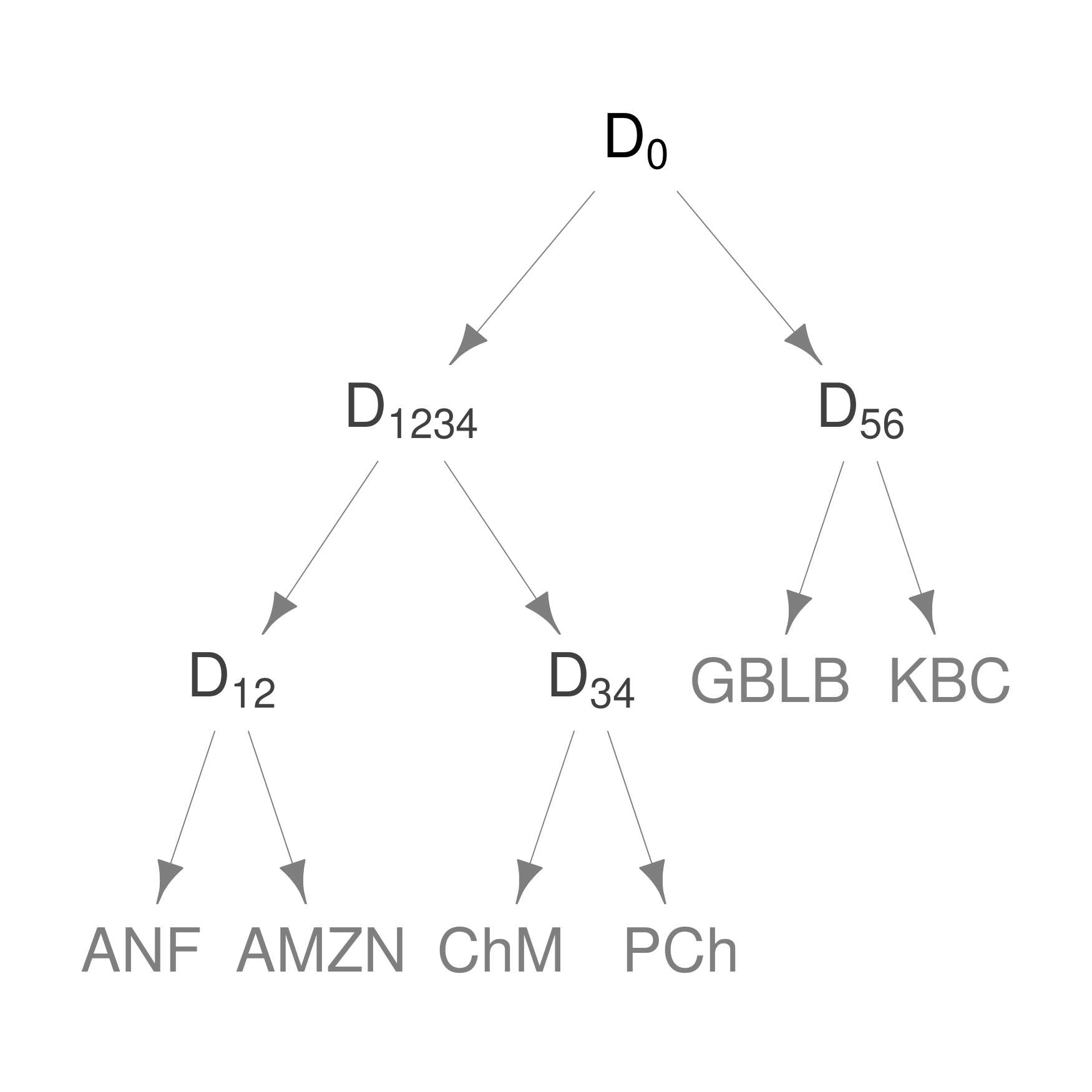}

\end{tabular}
\caption{Possible six-variate structures for the data. \label{wanted}}
\end{figure}

However, the 3-fan in four of the eight extra trivariate structures is strongly rejected by the data, which rather suggest the structure on the right-hand side of Figure~\ref{wanted}. Unfortunately, this last structure implies we must reject the 3-fan for all eight extra trivariate structures and not only for half of them, making the estimation of a six-variate structure quite uncertain. Since both PetroChina and China Mobile are traded not only in Hong Kong but also in New York, we could expect their log returns in Hong Kong to be more related to the log returns of some companies in New York (for instance ANF and AMZN) than to the log returns of two companies in Belgium. The structure on the right-hand side of Figure~\ref{wanted} seems therefore more appropriate.

\section{Discussion \label{dis_sec}}
In this paper, we have paved the way for a nonparametric rank-based approach to estimating a NAC structure, without the need to make any assumptions about the generators of the nested Archimedean copula prior to estimation of its structure apart from a natural identifiability condition. A number of challenges remain however:

\begin{itemize}
\item Difficulties can appear when the method is applied to real data for which the true copula is not necessarily a NAC. For instance, one can end up with a subset of estimated triples each strongly supported by the data (that is, very small p-values, meaning type I or type III errors are unlikely) and yet these triples contradict each other in the sense that no global structure can be retrieved unless $\alpha$ is set to 0.00 and the global estimated structure is a fan, i.e., and Archimedean copula.
\item The whole method is computationally intensive, unlike the method from \cite{OOW}. This is best understood by calculating the number of trivariate structures for which a test is necessary: to get an estimate for a five-variate structure for instance ($d=5$), we need to estimate 10 trivariate structures. With $d=10$, we have to estimate 120 trivariate structures. Regarding the estimation of a single trivariate structure, the required time depends mainly on the sample size and on the number of bootstrap replications. With 200 bootstrap replications (the value we used throughout the simulation section of this paper), a few seconds are needed at worst to get a trivariate estimated structure. Optimized \textsf{R} code is available from the authors.
\item Once a genuine NAC structure has been estimated with our nonparametric approach, the problem of estimating the generators remains. These generators cannot be estimated marginally, as doing so does not guarantee that the resulting function will be a proper copula.
\end{itemize}

\section*{Acknowledgements}
This research is supported by contract ``Projet d'Actions de Recherche Concert\'ees'' No.\ 12/17-045 of the ``Communaut\'e fran\c{c}aise de Belgique'' and by IAP research network grant nr. P7/06 of the Belgian government (Belgian Science Policy).

We are also grateful to Alexander McNeil (Heriot-Watt University) for careful reading of parts of our manuscript and for constructive, detailed feedback.
\section*{References}
\bibliographystyle{elsarticle-harv}
\bibliography{bibnac}

\section*{Appendix}
\subsection*{Proof of Lemma \ref{lem:scaspaneq}}

The proof is built in two steps. First we need to prove that, for $\varnothing \ne A \subset C \subset D_0$, we have
\[
  \lca(A, \lambda \sqcap C) = \lca(A, \lambda) \cap C.
\]

By definition, we have
\[
  \lca(A, \lambda) \cap C 
  = \biggl( \bigcap_{B \in \lambda : A \subseteq B} B \biggr) \cap C
  = \bigcap_{B \in \lambda : A \subseteq B} (B \cap C).
\]
Since $A$ is a subset of $C$ and since $A$ must be a subset of $B$, notice that requiring $A \subset B$ is equivalent to requiring $A \subset B \cap C$. Thus we can write

\[
  \lca(A, \lambda) \cap C 
  = \bigcap_{B \in \lambda : A \subseteq B \cap C} (B \cap C).
\]

On the other hand,
\[
  \lca(A, \lambda \sqcap C)
  = \bigcap_{B' \in \lambda \sqcap C : A \subseteq B'} B'.
\]
Since $\lambda \sqcap C = \{ B \cap C : B \in \lambda \} \setminus \{ \varnothing \}$ by definition, we can rewrite the above expression as

\[
  \lca(A, \lambda \sqcap C)
  = \bigcap_{B \in \lambda : A \subseteq B \cap C, B \cap C \neq \varnothing} (B \cap C).
\]
And because $A \subseteq B \cap C$ and $A \neq \varnothing$, the requirement $B \cap C \neq \varnothing$ can be dropped, thus

\[
  \lca(A, \lambda \sqcap C)
  = \bigcap_{B \in \lambda : A \subseteq B \cap C} (B \cap C)=\lca(A, \lambda) \cap C.
\]

The second step of the proof of Lemma \ref{lem:scaspaneq} begins by making use of the result from the first step. Indeed, we can now write:
\[
  \lca(T_j, \lambda \sqcap C) = \lca(T_j, \lambda) \cap C \text{ with } j = 1, 2.
\]
Suppose to begin $\lca( T_1, \lambda ) = \lca( T_2, \lambda )$. We therefore have 
\begin{align*}
  \lca(T_1, \lambda \sqcap C) &= \lca(T_1, \lambda) \cap C \\
  &= \lca(T_2, \lambda) \cap C \\
  &= \lca(T_2, \lambda \sqcap C).
\end{align*}

On the other hand, suppose that $\lca( T_1, \lambda \sqcap C ) = \lca( T_2, \lambda \sqcap C )$. Obviously,
\[
  \lca( T_1, \lambda ) \supset \lca( T_1, \lambda ) \cap C,
\]
and since $T_2$ is both a subset of $\lca( T_2, \lambda )$ and of $C$, we also have
\[
  \lca( T_2, \lambda ) \cap C \supset T_2.
\]
Because $\lca( T_1, \lambda \sqcap C ) = \lca( T_2, \lambda \sqcap C )$ implies that $\lca( T_1, \lambda ) \cap C = \lca( T_2, \lambda ) \cap C$, we have
\[
  \lca( T_1, \lambda ) \supset T_2,
\]
which means that $\lca( T_1, \lambda )$ is an ancestor of $T_2$, but not necessarily the lowest. Therefore $\lca( T_1, \lambda ) \supset \lca( T_2, \lambda )$. The converse inclusion holds as well, by symmetry of the argument. We conclude that the two sets $\lca( T_1, \lambda )$ and $\lca( T_2, \lambda )$ are in fact equal.

\subsection*{Proof of Lemma \ref{lem:sca}}

Suppose first that $A$ is the lowest common ancestor of $B$. Clearly $B \subset A$. Let $B_1, \ldots, B_p$ be the children of $A$ and recall these children form a partition of $A$. Hence $B = B \cap A = \bigcup_{j = 1}^p (B \cap B_j)$, and thus at least one of these intersections is not empty. However, if only one of these intersections would be nonempty, say $B \cap B_1$, then we would get $B = B \cap B_1$ and thus $B \subset B_1$, meaning that $B_1$ is also common ancestor of all elements of $B$. Since $B_1$ is a proper subset of $A$, this would be in contradiction with the assumption that $A$ is the lowest common ancestor of $B$. Therefore if $A$ is the sca of $B$, $B$ has a nonempty intersection with a least two children of $A$.

Conversely, suppose that $B \subset A$ and that there exist distinct children $B_1$ and $B_2$ of $A$ having nonempty intersections with $B$. Let $A'$ be a node in $\lambda$ such that $B \subset A'$. Then also $B \cap B_1 \subset A'$, and thus, as $B \cap B_1$ is nonempty, $A' \cap B_1$ is not empty. Similarly, $A' \cap B_2$ is not empty. Since $B_1$ and $B_2$ are disjoint, requirement (iii) in Definition~\ref{def:tree} then forces $B_1$ and $B_2$ to be descendants of $A'$. As a consequence $A \subset A'$. We have obtained that $A$ is included in every node $A'$ containing $B$ as a subset. We conclude that $A$ is the lowest common ancestor of the elements of $B$, as required.

\end{document}